\tikzset{shorten >=1pt, >=stealth, auto, node distance=40, initial text=}
\definecolor{darkgray}{rgb}{0.33, 0.33, 0.33}
\definecolor{lightgray}{rgb}{0.6, 0.6, 0.6}
\definecolor{myred}{RGB}{255,15,0}
\definecolor{ao(english)}{rgb}{0.0, 0.5, 0.0} 
\crefname{desc}{Item}{Items}
\crefname{line}{Line}{Lines}
\crefname{algo}{Algorithm}{Algorithms}
\crefname{prop}{Proposition}{Propositions}
\crefname{cor}{Corollary}{Corollaries}
\crefname{lem}{Lemma}{Lemmas}
\crefname{rem}{Remark}{Remarks}
\crefname{ex}{Example}{Examples}
\crefname{def}{Definition}{Definitions}
\def\th@plain{%
  \thm@headfont{\bfseries\sffamily}
  \thm@notefont{\normalfont\sffamily}
  \itshape 
}
\def\th@definition{%
  \thm@headfont{\bfseries\sffamily}
  \thm@notefont{\normalfont\sffamily}
  \normalfont 
}
\theoremstyle{plain}
\newtheorem{theorem}{Theorem}
\newtheorem{lemma}[theorem]{Lemma}
\newtheorem{corollary}[theorem]{Corollary}
\newtheorem{proposition}[theorem]{Proposition}
\theoremstyle{definition}
\newtheorem{definition}[theorem]{Definition}
\newtheorem{example}[theorem]{Example}
 \newtheorem{remark}[theorem]{Remark}
\theoremstyle{remark}
\newcommand{\charfusion}[2]{%
  \def\ch@rfusion##1##2{%
    \ooalign{\hfil$##1#1$\hfil\cr\hfil$##2$\hfil\crcr}}%
      \mathop{%
      \vphantom{#1}%
      \mathpalette\ch@rfusion#2}\displaylimits}
\newcommand{\prefs}[1]{\mathrm{Prefs}(#1)}
\newcommand{\eve}{{\mathsf{\exists}}}
\newcommand{\adam}{{\mathsf{\forall}}}
\newcommand{\RAT}{\ensuremath{\mathrm{RAT}}\xspace}
\newcommand{\DRAT}{\ensuremath{\mathrm{DRAT}}\xspace}
\newcommand{\AUT}{\ensuremath{\mathrm{AUT}}\xspace}
\newcommand{\modif}[1]{#1}
\begin{document}

\title{Synthesizing Computable Functions from\\ \mbox{Rational Specifications over Infinite Words}}

\author{Emmanuel Filiot \qquad Sarah Winter\\
Université libre de Bruxelles, Belgium}

\maketitle

\begin{abstract}
  The synthesis problem asks to automatically generate, if it exists, an algorithm
  from a specification of correct input-output pairs. In this paper, we consider the synthesis of computable
  functions of infinite words, for a classical Turing computability
  notion over infinite inputs. We consider specifications which are rational relations of infinite words, i.e., specifications defined by non-deterministic
  parity transducers. We prove that the synthesis problem of
  computable functions from rational specifications is
  undecidable. We provide an incomplete but sound reduction to some
  parity game, such that if Eve wins the game, then the
  rational specification is realizable by a computable function. We
  prove that this function is even computable by a deterministic two-way
  transducer.

  We provide a sufficient condition under which the latter
  game reduction is complete. This entails the decidability of the
  synthesis problem of computable functions, which we proved to be \textsc{ExpTime}-complete, for a large subclass of
  rational specifications, namely deterministic rational
  specifications.  This subclass contains the class of automatic
  relations over infinite words, a yardstick in reactive synthesis.
\end{abstract}

\section{Introduction}
\label{sec:intro}

Program synthesis aims at automatically generating programs from
specifications. This problem can be formalized as follows. There are four parameters: two sets of input and
output domains $I,O$, a class $\mathcal{S}$ of relations (called
specifications) from $I$ to $O$, and a class $\mathcal{I}$ of (partial) functions
(called implementations) from $I$ to $O$. Then, given a
specification $S\in \mathcal{S}$ defining the correct
input/output relationships, the synthesis problem asks to check
whether there exists a function $f\in \mathcal{I}$ satisfying $S$ in
the following sense: its graph is included in $S$ and it has the same
domain as $S$ (i.e., $f$ is defined on $x\in I$ iff $(x,y)\in S$ for
some $y\in O$). Using a set-theoretic terminology, $f$ is said to
\emph{uniformize} $S$. Moreover in synthesis, if such an $f$ exists,
then the synthesis algorithm should return (a finite presentation of)
it.

Program synthesis quickly turns to be undecidable depending on the
four parameters mentioned before. Therefore, research on synthesis either
turn to developing efficient sound but incomplete methods, see for example
the syntax-guided synthesis approach~\cite{DBLP:series/natosec/AlurBDF0JKMMRSSSSTU15} or bounded
synthesis~\cite{DBLP:journals/sttt/FinkbeinerS13,DBLP:conf/atva/GerstackerKF18}, or restrict the class of specifications
$\mathcal{S}$ and/or the class of implementations $\mathcal{I}$. A
well-known example of the latter approach is reactive synthesis, where
$\mathcal{S}$ are automatic relations\footnote{relations recognized by
two-tape parity automata alternatively reading one input and
one output symbol.} over infinite words, and
$\mathcal{I}$ are Mealy machines~\cite{BuLa69,PnuRos:89,clarke2018handbook}. Infinite words (over a
finite alphabet) are used to
model infinite executions of reactive systems, and Mealy machines are
used as a model of reactive systems processing bit streams.

In this paper, our goal is to synthesize, from specifications which
are semantically binary relations of infinite words, stream-processing
programs, which are semantically \emph{streaming computable} functions
of infinite words (just called \emph{computable} functions in the
sequel). Let us now make the computability notion we use more precise. Let $\Sigma$ and
$\Gamma$ be two finite alphabets. A partial function $f\colon \Sigma^\omega \to \Gamma^\omega$, whose
domain is denoted $\mathrm{dom}(f)$, is said to be computable, if there exists a deterministic (Turing) machine $M$ with three tapes, a read-only one-way input tape, a two-way working tape, and a write-only output tape that works as follows:
if the input tape holds an input sequence $\alpha \in \mathrm{dom}(f)$, then $M$ outputs longer and longer prefixes of $f(\alpha)$ when reading longer and longer prefixes of $\alpha$.
A definition of this machine model can be found, for instance, in
\cite{weihrauch2012computable}.

\begin{example}\label[ex]{ex:cont}
Over the alphabet $\Sigma = \Gamma = \{a,b,A,B\}$, consider the
specification given by the relation $R_1 = \{(ux\alpha, xu\beta)\mid u\alpha,u\beta\in\{a,b\}^\omega,x\in \{A,B\} \}$.
The relation $R_1$ is automatic: an automaton needs to check that
the input prefix $u$ occurs shifted by one position on the output,
which is doable using only finite memory. Checking
that the first output letter $x$ also  appears after $u$  on the input
can also be done by storing $x$ in
the state. Note that some acceptance condition (e.g., parity)  is
needed to make sure that $x$ is met again on the input. There is no
Mealy machine which can realize $R_1$, because Mealy machines operate in a
synchronous manner: they read one input symbol and must
deterministically produce one output symbol.  Here, the first output symbol which has to be
produced depends on the letter $x$ which might appear arbitrarily far in
the input sequence. However, $R_1$ can be uniformized by a computable
function: there is an \emph{algorithm} reading the input from left to right and
which simply waits till the
special symbol $x\in\{A,B\}$ is met on the input. Meanwhile, it stores
longer and longer prefixes of $u$ in memory (so it needs unbounded
memory) and once $x$ is met, it outputs $xu$. Then, whatever it reads on the input, it just copies it on the
output (realizing the identity function over the remaining infinite
suffix $\alpha$). Note that this algorithm produces a correct output
under the assumption that $x$ is eventually read. 
\end{example}

\subparagraph{Contributions.}  We first
investigate the synthesis of computable functions from \emph{rational
  specifications}, which are those relations recognizable by
non-deterministic finite state transducers, i.e., parity automata over a
product of two free monoids. We however show this problem is
undecidable (\cref{thm:undecidable}). We then give an incomplete but sound
algorithm in \cref{sec:game}, based on a reduction to $\omega$-regular two-player games. Given a transducer
$\mathcal T$ defining a specification $\mathcal R_\mathcal T$, we show how to effectively
construct a two-player game $\mathcal G_\mathcal T$, proven to be solvable in
\textsc{ExpTime}, such that if Eve wins $\mathcal G_\mathcal T$, then there
exists a computable function which uniformizes the relation recognized
by $\mathcal T$, which can even be computed by some input-deterministic \textbf{two-way} finite state transducer (a transducer which whenever it
reads an input symbol deterministically produces none or several
output symbols and either moves forward or backward on the input). It
is easily seen that two-wayness is necessary: the relation $R_1$
of Example~\ref{ex:cont} cannot be uniformized by any deterministic device which moves only
forward over the input and only uses finitely many states, as the
whole prefix $u$ has to be remembered before reaching $x$. However, a
two-way finite-state device can do it: first, it scans the prefix up to $x$, comes
back to the beginning of the input, knowing whether $x=A$ or $x=B$, and then can produce the output.

Intuitively, in the two-player game we construct, called
unbounded delay game, Adam picks the input symbols while Eve picks the
output symbols. Eve is allowed to delay her moves an arbitrarily
number of steps, gaining some lookahead information on Adam's
input. We use a finite abstraction to store the lookahead gained on
Adam's moves. We show that any finite-memory winning strategy in this game can be
translated into a function uniformizing the specification such that it
is computable by an input-deterministic two-way
transducer.

In \cref{sec:completeness}, we provide a sufficient condition $\mathcal{P}$ on
relations for which the game reduction is complete. In particular, we show that if a
relation $R$ satisfies $\mathcal{P}$, then Eve wins the game iff $R$
can be uniformized by a computable function. A large subclass of
rational relations satisfying this sufficient condition is the class
of deterministic rational relations (\DRAT, see~\cite{sakarovitch2009elements}).
Deterministic rational relations are
those relations recognizable by deterministic two-tape automata, one
tape holding the input word while the other holds the output
word. It strictly subsumes the class of automatic relations, and,
unlike for automatic relations, the two heads are not required to
move at the same speed.
Furthermore, when the domain of the relation is topologically closed for the Cantor
distance\footnote{\label{note}A set $X\subseteq\Sigma^\omega$ is \emph{closed} if
the limit, if it exists, of any sequence $(x_i)_i$ of infinite words
in $X$ is in $X$. The limit here is defined based on the Cantor
distance, which, for any two infinite words $u,v$, is $0$ if $u=v$ and otherwise $2^{-\ell}$ where $\ell$ is the
length of their longest common prefix.}, we show that strategies in which Eve delays her moves at most a bounded number of steps are sufficient for Eve to win. Such a strategy can in turn be converted into an input-deterministic \textbf{one-way} transducer. This entails that for \DRAT-specifications with a closed domain (such as for instance specifications with domain $\Sigma^\omega$, i.e., total domains), if it is uniformizable by a computable function, then it is uniformizable by a function computable by an input-deterministic one-way transducer.

Based on the completeness result, we prove
our main result, that
the synthesis problem of computable
functions from deterministic rational relations is
\textsc{ExpTime}-complete. Hardness also holds in the particular case of
automatic relations of total domain.

\subparagraph{Total versus partial domains.} We would like to emphasize
here on a subtle difference between our formulation of synthesis
problems and the classical formulation in reactive synthesis.
Classically in reactive synthesis, it is required that a controller produces
for \emph{every} input sequence an output sequence correct w.r.t.\ the
specification. Consequently, specifications with partial
domain are by default unrealizable. So, in this setting, the specification $R_1$ of
the latter example is not realizable, simply because its domain is not
total (words with none or at least two occurrences of a symbol in
$\{A,B\}$ are not in its domain). In our definition, specifications
with partial domain can still be realizable, because the synthesized function, if it exists, can be partial and must be
defined only on inputs for which there exists at least one matching
output in the specification. A well-known notion corresponding to this
weaker definition is that of
uniformization~\cite{IC::Kobayashi69,choffrut1999uniformization,CarayolL14,DBLP:conf/icalp/FiliotJLW16},
this is why we often use the terminology ``uniformizes'' instead
of the more widely used terminology ``realizes''. The problem of
synthesizing functions which uniformize quantitative
specifications has been recently investigated in~\cite{almagor2020good}. In~\cite{almagor2020good},
it was called the good-enough synthesis problem, a controller
being good-enough if it is able to compute outputs for all
inputs for which there exists a least one matching output by the
specification. The uniformization setting allows to formulate assumptions
that the input the program receives is not any input, but belongs to
some given language. Related to that, there is a number of works on reactive synthesis under
assumptions on the behavior of the
environment~\cite{DBLP:conf/tacas/ChatterjeeH07,DBLP:conf/atva/BloemEK15,DBLP:journals/amai/KupfermanPV16,DBLP:conf/icalp/ConduracheFGR16,DBLP:journals/acta/BrenguierRS17}.

\subparagraph{Related work.}\label{sec:related}
To the best of our knowledge, this work is the first
contribution which addresses the synthesis of algorithms from
specifications which are relations over infinite inputs, and such that
these algorithms may need unbounded memory, as illustrated by the
specification $R_1$ for which any infinite-input Turing-machine realizing the
specification needs unbounded memory. There are however two related
works, in some particular or different settings.

First, in~\cite{DBLP:conf/concur/DaveFKL20}, the synthesis of computable
functions has been shown to be decidable in the
particular case of \emph{functional} relations, i.e., graphs of
functions. The main contribution
of~\cite{DBLP:conf/concur/DaveFKL20} is to prove that checking whether
a function represented by a non-deterministic two-way transducer is
computable is decidable, and that computability coincides with
continuity (for the Cantor distance) for this large class of functions. The techniques
of~\cite{DBLP:conf/concur/DaveFKL20} are different to ours, e.g., games are not
needed because output symbols functionally depends on input ones, even in the
specification, so, there are no choices to be made by Eve.

Second, Hosch and Landweber \cite{hosch1972finite} proved decidability
of the synthesis problem of Lipschitz-continuous functions from
automatic relations with \emph{total} domain, Holtmann, Kaiser and Thomas \cite{holtmann2010degrees} proved decidability of the synthesis
of continuous functions from automatic relations with
\emph{total} domain, and Klein and Zimmermann
\cite{DBLP:journals/corr/KleinZ14} proved
\textsc{ExpTime}-completeness for the former problem. So, we inherit the lower bound because
automatic relations are particular \DRAT relations, and as
we show in the last section of the paper, the synthesis problem of computable functions coincides with
the synthesis problem of continuous functions. We obtain the same
upper bound as~\cite{DBLP:journals/corr/KleinZ14} for a more general
class of specifications, namely \DRAT, and in the more general setting
of specifications with partial domain. As we show, total vs.\ partial
domains make an important difference: two-way transducers may be
necessary in the former case, while one-way transducers are sufficient
in the latter.  \cite{holtmann2010degrees,DBLP:journals/corr/KleinZ14}
also rely on a reduction to two-player games called delay games, but for which bounded delay
are sufficient. However, our game is built such that it accounts for the fact that
unbounded delays can be necessary and it also monitors the domain, which is
not necessary
in~\cite{holtmann2010degrees,DBLP:journals/corr/KleinZ14} because
specifications have total domain. Accordingly, the main
differences between \cite{DBLP:journals/corr/KleinZ14} and our delay
games are their respective winning objectives and correctness
proofs. Another difference is that our game applies to the general
class of rational relations, which are \emph{asynchronous} (several
symbols, or none, can correspond to a single input symbol) in contrast
to automatic relations which are synchronous by definition.

The work is an extended version of the conference work \cite{DBLP:conf/fsttcs/FiliotW21} with additional proof details.
\section{Preliminaries}
\label{sec:prelims}

\subparagraph{Words, languages, and relations.}

Let $\mathbbm N$ denote the set of non-negative integers. Let $\Sigma$
and $\Gamma$ denote \emph{alphabets} of elements called \emph{letters} or \emph{symbols}.
A \emph{word} resp.\ \emph{$\omega$-word} over $\Sigma$ is an empty or
non-empty finite resp.\ infinite sequence of letters over $\Sigma$.
Let $\Sigma^*$, $\Sigma^+$, and $\Sigma^\omega$ denote the set of finite, non-empty finite, and infinite words over $\Sigma$, respectively.
Let $\Sigma^\infty$ denote $\Sigma^* \cup \Sigma^\omega$.
The empty word is denoted by $\varepsilon$, the length of a word by $|\cdot|$.
Usually, we denote finite words by $u,v,w$, etc., and infinite words by $\alpha,\beta,\gamma$, etc.
Given an (in)finite word $\alpha = a_0a_1\cdots$ over $\Sigma$ with
$a_0,a_1,\dots \in \Sigma$, let $\alpha(i)$ denote the letter $a_i$,
$\alpha(i\colon\!j)$ denote the infix $a_ia_{i+1}\cdots a_j$,
$\alpha(\colon\!i)$ the prefix $a_0a_1\cdots a_i$, and
$\alpha(i{+}1\colon\!)$ the suffix $a_{i+1}\cdots$ for $i \leq j \in
\mathbbm N$. \modif{Given two words $u\in\Sigma^*$ and
  $\alpha\in\Sigma^\infty$, we say that $u$ is a prefix of $\alpha$, denoted
  $u\preceq \alpha$, if $\alpha(\colon\!i) = u$ for some
  $i<|\alpha|$, and in that case let $u^{-1}\alpha = \alpha(i\colon\!)$. We also denote by $\prefs{\alpha}$ the set of all
  finite prefixes of $\alpha$, i.e., $\prefs{\alpha} = \{\alpha(\colon\!i)\mid
i<|\alpha|\}$.}
For two (in)finite words $\alpha,\beta$, let $\alpha \wedge \beta$
denote their longest common prefix, i.e., \modif{the longest word in
$\prefs{\alpha}\cap \prefs{\beta}$ if $\alpha\neq \beta$, otherwise
$\alpha$ if $\alpha=\beta$.}

A \emph{language} resp.\ \emph{$\omega$-language} $L$ is a subset of
$\Sigma^*$ resp.\ $\Sigma^\omega$, its set of finite prefixes is denoted by $\prefs{L}$.
A (binary) \emph{relation} resp.\ \emph{$\omega$-relation} $R$ is a
subset of $\Sigma^* \times \Gamma^*$ resp.\ $\Sigma^\omega \times
\Gamma^\omega$. An $\omega$-relation is just called a relation when
infiniteness is clear from the context. The domain $\mathrm{dom}(R)$
of a ($\omega$)-relation $R$ is the set $\{ \alpha \mid \exists \beta\
(\alpha,\beta) \in R\}$. It is \emph{total} if $\mathrm{dom}(R) =
\Sigma^*$ resp.\ $\Sigma^\omega$. Likewise, we define $\mathrm{img}(R)$ the image of $R$, as
the domain of its inverse.
A relation $R$ is \emph{functional} if for each $u \in
\mathrm{dom}(R)$ there is at exactly one $v$ such that $(u,v) \in R$.
\emph{By default in this paper, relations and functions are
partial, i.e., are not necessarily total.} 
\modif{Given a function $f\colon \Sigma^\omega \to \Gamma^\omega$, let $\hat f\colon \Sigma^* \to \Gamma^\infty$ denote the function defined for all $u \in \prefs{\mathrm{dom}(f)}$ as
\begin{equation}\label{eq:fhat}
\hat f(u) = \bigwedge \{ f(u\alpha) \mid  u\alpha \in \mathrm{dom}(f) \},
\end{equation}
that is, the longest common prefix of all outputs of $f$ for inputs that begin with $u$.
}

\subparagraph{Automata.}

A \emph{parity automaton} is a tuple $\mathcal A = (Q,\Sigma,q_0,\Delta,c)$, where $Q$ is a finite set of states, $\Sigma$ a finite alphabet, $q_0 \in Q$ an initial state, $\Delta \subseteq Q \times \Sigma \times Q$ a transition relation, and $c\colon Q \to \mathbbm N$ is a function that maps states to \emph{priorities}, also called \emph{colors}.
A parity automaton is \emph{deterministic} if its transition relation $\Delta$ is given as a transition function $\delta$. We denote by $\delta^*$ the usual extension of $\delta$ from letters to finite words.
\modif{A \emph{run of $\mathcal A$ on a word $w \in \Sigma^\infty$} is a word
$\rho \in Q^\infty$ of length $|w|+1$ if $w$ is finite, otherwise of
infinite length,  such that $(\rho(i),w(i),\rho(i+1)) \in \Delta$ for all $0 \leq i < |w|$.}
A run on $\varepsilon$ is a single state.
We say that $\rho$ begins in $\rho(0)$ and ends in $\rho(|w|)$ if $w$ is finite.

\modif{Given a set $E$ and a sequence $s\in E^\infty$, we let
$\mathrm{Occ}(s)$ as the set of elements of $E$ that occur in $s$,
and $\mathrm{Inf}(s)$ as the set of elements of $E$ that occur
infinitely often in $s$. In particular $\mathrm{Inf}(s)=\varnothing$
if $s$ is finite.} Given a run $\rho$, we let
$c(\rho)=c(\rho(0))c(\rho(1))\cdots$ be the sequence of colors seen
along that run.  The run $\rho$ is \emph{accepting} if $\rho\in Q^\omega$, $\rho(0) =
q_0$ and $\mathrm{max}\ \mathrm{Inf}(c(\rho))$ is even.
The language \emph{recognized} by $\mathcal A$ is the set $L(\mathcal
A) = \{ \alpha \in \Sigma^\omega \mid \text{there is an accepting run $\rho$ of $\mathcal A$ on $\alpha$} \}$.
A language $L \subseteq \Sigma^\omega$ is called \emph{regular} if $L$ is recognizable by a parity automaton.



    



\subparagraph{One-way transducers.}

A \emph{transducer} (1NFT) is a tuple $\mathcal T =
(Q,\Sigma,\Gamma,q_0,\Delta,c)$, where $Q$ is finite state set,
$\Sigma$ and $\Gamma$ are finite alphabets, $q_0 \in Q$ is an initial
state, $\Delta \subseteq Q \times \Sigma^* \times \Gamma^* \times Q$
is a finite set of transitions, and $c\colon Q \to \mathbbm N$ is a
parity function.
It is \emph{input-deterministic} (1DFT) (also called \emph{sequential} in the
literature) if $\Delta$ is expressed as a function $Q
\times \Sigma \to \Gamma^* \times Q$. A \emph{finite non-empty run}
$\rho$ is a non-empty sequence of transitions of the form
$(p_0,u_0,v_0,p_1)(p_1,u_1,v_1,p_2)\cdots (p_{n-1},u_{n-1},v_{n-1},p_n)
\in \Delta^*$.
The \emph{input (resp.\ output)} of $\rho$ is $\alpha = u_0\cdots u_{n-1}$ (resp.\ $\beta = v_0\cdots v_{n-1}$).
As shorthand, we write $\mathcal T\colon p_0 \xrightarrow{\alpha/\beta} p_n$.
An \emph{empty run} is denoted as $\mathcal T\colon p \xrightarrow{\varepsilon/\varepsilon} p$ for all $p\in Q$.
Similarly, we define an \emph{infinite run}.
A run is \emph{accepting} if it is infinite, begins in the initial
state and satisfies the parity condition. In this paper, we also
assume that for any accepting run $\rho$, its input and output are
both infinite. This can be syntactically ensured with the parity
condition. The relation \emph{recognized} by $\mathcal{T}$ is
$R(\mathcal T) = \{ (\alpha,\beta) \mid \textnormal{there is an
  accepting run of $\mathcal T$ with input $\alpha$ and output
  $\beta$}\}$. Note that with the former assumption, we have
$R(\mathcal T)\subseteq \Sigma^\omega\times \Gamma^\omega$. 
A relation is called \emph{rational} if it is recognizable by a transducer, we denote by \RAT the class of rational relations.
A \emph{sequential} function is a function whose graph is $R(\mathcal T)$ for an input-deterministic transducer $\mathcal T$.

\subparagraph{Two-way transducers.}

Given $\Sigma$, let $\Sigma_\vdash$ denote $\Sigma \uplus \{\vdash\}$, $\vdash$ is a new left-delimiter symbol.
An \emph{input-deterministic two-way transducer} (2DFT) is a tuple
$\mathcal T = (Q,\Sigma_\vdash,\Gamma,q_0,\delta,c)$, where $Q$ is a
finite state set, $\Sigma$ and $\Gamma$ are finite alphabets, $q_0\in
Q$ is an initial state, $\delta\colon Q \times \Sigma_\vdash \to
\Gamma^* \times \{1,-1\} \times Q$ is a transition function \modif{such that
for all states $q$, $\delta(q,\vdash)\in \Gamma^*\times \{1\}\times Q$}, and $c\colon Q \to \mathbbm N$ is a function that maps states to colors.
A two-way transducer has a two-way read-only input tape and a one-way write-only output tape.
Given an input sequence $\alpha \in \Sigma^\omega$, let $\alpha(-1) = \ \vdash$, the input tape holds $\vdash\!\alpha$.
We denote a transition $\delta(p,a) = (\gamma,d,q)$ as a tuple $(p,a,\gamma,d,q)$, and $\Delta$ denotes the tuple representation of $\delta$.
A \emph{run} of $\mathcal T$ on $\alpha \in \Sigma^\omega$ is a sequence of transitions 
$
\rho = (q_0,\alpha(i_0),\gamma_0,d_0,q_1)(q_1,\alpha(i_1),\gamma_1,d_1,q_2)\cdots \in \Delta^\omega
$ such that $i_0 = 0$, and $i_{k+1} = i_{k} + d_k$ for all $k\in\mathbbm N$.
The \emph{input} of $\rho$ is $\alpha$ and the \emph{output} of $\rho$
is $\beta = \gamma_0\gamma_1\cdots$. We define $c(\rho)$ as the
sequence of colors $c(q_0)c(q_1)\cdots$, $\rho$ is \emph{accepting} if
$\mathrm{max}\ \mathrm{Inf}(c(\rho))$ is even and \modif{it visits all
positions of $\alpha$, i.e., $\{i_0,i_1,\dots\} = \mathbbm{N}$.}
The functional $\omega$-relation \emph{recognized} by the deterministic-two way transducer is defined as 
$
R(\mathcal T) = \{ (\alpha,\beta) \mid \textnormal{there is an accepting run of $\mathcal T$ with input $\alpha$ and output $\beta$}\}.
$

\subparagraph{Games.}

A \emph{game arena} is a tuple $G = (V_0,V_1,v_0,A,E)$, where $V = V_0 \uplus V_1$ is a set of vertices, $V_0$ belongs to Eve and $V_1$ to Adam, $v_0$ is an initial vertex, $A$ is a finite set of actions, and $E \subseteq V \times A \times V$ is a set of labeled edges such that $(v,a,v')\in E$ and $(v,a,v'')\in E$ implies that $v' = v''$ for all $v\in V$ and $a\in A$.
We assume that the arena is deadlock-free (\modif{there is always at least one outgoing
edge from any vertex}). We use letters on edges as it is more convenient to have them at hand for the proofs, it is however not necessary.
%
A \emph{play} in $G$ is an infinite sequence $v_0a_0v_1a_1\cdots$ such that $(v_i,a_i,v_{i+1}) \in E$ for all $i \in \mathbbm N$.
Note that a play is uniquely determined by its action sequence.
%
A \emph{game} is of the form $\mathcal G = (G,\mathit{Win})$, where $G$ is a game arena and $\mathit{Win} \subseteq V^\omega$ is a winning condition.
Eve wins a play $\alpha = v_0a_0v_1a_1\cdots$ if $v_0v_1\cdots \in \mathit{Win}$, otherwise Adam wins.
For ease of presentation, we also write $\alpha \in
\mathit{Win}$. \modif{A parity condition $\mathit{Win}$ is represented by a coloring
  function $c :V\rightarrow \mathbb{N}$ and consists of the set of
  sequences $v_0a_0v_1a_1\cdots$ such that $\mathrm{max}\
  \mathrm{Inf}(c(v_0)c(v_1)\cdots)$ is even. A \emph{parity game} is a
  game $\mathcal{G} = (G,\mathit{Win})$ such that $\mathit{Win}$ is a
  parity condition (given as a coloring function $c$)}.

A \emph{strategy} for Eve resp.\ Adam is a function $\sigma\ :\ (VA)^*
V_0 \to A$ resp.\ $\sigma\ :\ (VA)^* V_1 \to A$ such that for all
$x\in  (VA)^*$ and $v \in V_0$ resp. $v\in V_1$, there exists $v'\in V$ such that
$(v,\sigma(xv),v')\in E$. A play $v_0a_0v_1a_1\cdots$ is consistent with a strategy $\sigma$ for Eve resp.\ Adam if $\sigma(v_0a_0\cdots v_i) = a_{i}$ for all $i \in \mathbbm N$ with $v_i \in V_0$ resp.\ $v_i \in V_1$.
A strategy $\sigma$ for Eve is a \emph{winning strategy} if $\alpha
\in \mathit{Win}$ for all plays $\alpha$ consistent with $\sigma$.

A \emph{strategy automaton} for Eve is a tuple $\mathcal S =
(M,V,m_0,\delta,\mu)$, where $M$ is a finite set of (memory) states,
$V$ is the alphabet, $m_0$ is an initial state, $\delta\colon M \times
V \to M$ is the memory update function, and $\mu\colon M \times V_0
\to A$ is the next action function such that for all $v\in V_0$ and $m\in
M$, there is $v'\in V$ with $(v,\mu(m,v),v') \in E$. \modif{The transition
function $\delta$ is naturally extended into a function $\delta^* :
M\times V^*\rightarrow M$ over finite sequences over $V$. 
A strategy automaton $\mathcal S$ defines a strategy $\sigma_{\mathcal
  S}$ as follows: for all $x\in (VA)^*$ and $v\in V_0$,
$\sigma_{\mathcal S}(xv) = \mu(\delta^*(m_0,\pi_V(x)),v)$ where
$\pi_V(x)$ is the projection of $x$ onto $V$.}

\subparagraph{Problem statement.}
\label{sec:main}

In this section, we introduce the problem we want to solve. Let $\Sigma,\Gamma$ be two finite
alphabets. Given a relation $R \subseteq \Sigma^\omega \times
\Gamma^\omega$ and a (partial) function $f :\Sigma^\omega \rightarrow
\Gamma^\omega$, $f$ is said to \emph{uniformize} $R$ if $\mathrm{dom}(f) =
\mathrm{dom}(R)$ and $(\alpha,f(\alpha)) \in R$ for all $\alpha \in
\mathrm{dom}(R)$. We also say that $R$ is uniformizable by $f$ or that
$f$ is a uniformizer of $R$. We are interested in computable uniformizers, which
we now introduce.

\begin{definition}[\cite{weihrauch2012computable} computable functions]\label[def]{def:computable}
A function $f\colon \Sigma^\omega \to \Gamma^\omega$ is called \emph{computable} if there exists a deterministic multi-tape machine $M$ that computes $f$ in the following sense,
$M$ has a read-only one-way input tape, a two-way working tape, and a write-only one-way output tape.
All tapes are infinite to the right, finite to the left.
For any finite word $w \in \Sigma^*$, let $M(w)$ denote the
output\footnote{The finite word written on the output tape the first
  time $M$ reaches the $|w|$-th cell of the input tape ($M$ is assumed
to visit every input cell)} of
$M$ on $w$.
The function $f$ is said to be computed by $M$ if for all $\alpha \in
\mathrm{dom}(f)$ and $i\in\mathbbm N$ there exists $j\in\mathbbm N$
such that $f(\alpha)(\colon\! i)$ is a prefix of $M(\alpha(\colon\! j))$.
\end{definition}

\begin{remark}\label[rem]{remark:domain}
  Note that in the above definition, checking whether the infinite input belongs to the domain is not a requirement and should not be, because in general, it is impossible to do it reading only a finite prefix of the input.
  That is why in this definition, we assume that the input belongs to the domain of the function.
  It is a reasonable assumption. 
  For instance, the inputs may have been produced by another program (e.g., a transducer) for which one has guarantees that they belong to some well-behaved (e.g., regular) language.
\end{remark}

\begin{example}\label[ex]{ex:cont-uni}
    To begin with, consider the function $f_1\colon \{a,b,c\}^\omega
    \to \{b,c\}^\omega$ defined by $f_1(a^nba^\omega) = b^\omega$ and
    $f_1(a^nca^\omega) = c^\omega$ for all $n\in \mathbbm N_{\geq
      1}$. It is computable by a TM which on inputs of the form $a^nxa^\omega$ for
    $x\in\{b,c\}$, outputs nothing up to reading $x$, and then,
    depending on $x$, either outputs $c$ or $b$ whenever it reads an
    $a$ in the remaining suffix $a^\omega$.

    Consider the function $f_2:\{a,b\}^\omega\rightarrow \{a,b\}^\omega$ defined by $f_2(\alpha) = a^\omega$ if $\alpha$ contains infinitely
    many $a$ and $f(\alpha) = b^\omega$ otherwise for all $\alpha \in
    \{a,b\}^\omega$. It is rational but not
    computable, because to determine even the first output letter, an
    infinite lookahead is needed.
\end{example}

Let $\mathcal{S}$ be a class of relations. The
\emph{$\mathcal{S}$-synthesis problem} asks, given a relation $S\in\mathcal{S}$ (finitely represented), whether there exists a
computable function which uniformizes $S$. If such a function exists,
then the procedure must return a TM computing it. Our first result is
an undecidability result.

\vspace{4mm}

\begin{proposition}
    \label[prop]{thm:undecidable}
The \RAT-synthesis problem is undecidable, even if restricted to the
subclass of rational relations with total domain.
\end{proposition}

\begin{proof}[Proof sketch]
We sketch a reduction from Post's correspondence problem.
Let $u_1,\dots,u_n$ and $v_1,\dots,v_n$ be a PCP instance \modif{over
alphabet $\{0,1\}$}.
We construct the $\omega$-rational relation $R$ that contains pairs
$(\alpha,\beta)$ such that $\alpha = i_1\cdots i_k\alpha'$ with
$i_1\cdots i_k \in \{1,\dots,n\}^*$ and $\alpha' \in \{a,b\}^\omega$,
and $\beta = u_{i_1}\cdots u_{i_k}a^\omega$ if $\alpha'$ contains
infinitely many $a$ and \modif{otherwise if $\alpha'$ contains
  finitely many $a$, then $\beta \neq v_{i_1}\cdots
  v_{i_k}a^\omega$}. If the instance of PCP has no solution, then the function $f\colon i_1\cdots i_k\alpha' \mapsto u_{i_1}\cdots u_{i_k}a^\omega$ uniformizes $R$, because $u_{i_1}\cdots u_{i_k} \neq v_{i_1}\cdots v_{i_k}$.
The function $f$ is clearly computable.
If the PCP has a solution, no computable function uniformizes $R$.
If the integer sequence $i_1\cdots i_k$ is the solution, then $u_{i_1}\cdots u_{i_k} = v_{i_1}\cdots v_{i_k}$.
Intuitively, for an input sequence starting with the solution, no prefix of the input sequence allows to determine whether the output must begin with $u_{i_1}\cdots u_{i_k}$ or is not allowed to begin with $u_{i_1}\cdots u_{i_k}$.

The relation $R$ can be made complete by also allowing all ``invalid'' inputs together with any output, i.e., by adding all pairs $(\alpha,\beta) \in \{1,\dots,n,a,b\}^\omega \times \{1,\dots,n,a,b\}^\omega$ where the input sequence $\alpha$ is not of the form $i_1\cdots i_k\alpha'$ with $i_1\cdots i_k \in \{1,\dots,n\}^*$ and $\alpha' \in \{a,b\}^\omega$, and any output sequence $\beta$.
Any Turing machine computing $f$ can easily be adapted to verify
whether the input is valid.
\end{proof}

Next, we give a semi-decision procedure for solving the \RAT-synthesis problem which is sound but not complete.
In \cref{sec:completeness}, we introduce a sufficient condition for
completeness which yields a (sound and complete) decision procedure
for large and expressive classes of rational relations.

\section{Unbounded Delay Game}
\label{sec:game}

In this section, given a rational relation (as a transducer), we show how to construct a finite-state $\omega$-regular two-player game
called (unbounded) delay game. We prove that if Eve wins this game then
there exists a computable function which uniformizes the relation. Moreover, this function is 
computable by an input-deterministic two-way transducer.
We analyze the complexity of solving the game, which turns out to be in \textsc{ExpTime}.
Solving this game yields an incomplete, but sound, decision procedure for the \RAT-synthesis problem.

In the game, Adam provides inputs and Eve must produce
outputs such that the combination of inputs and outputs is in the
relation. However, as seen in \cref{ex:cont}, Eve might need to wait
arbitrarily long before she can safely produce output.
Hence, as the game is finite, it can not store arbitrary long
input words, and Eve's actions cannot produce arbitrarily long
words neither. Instead, we finitely abstract input and output words
using a notion we call profiles. 
Informally, a profile of an input word stores the effects of the word (together with some output word) on the states of the transducer (that specifies the relation) as well as the maximal priority seen along the induced state transformation.
Such profiles contain sufficient
information to express a winning condition which makes sure 
that given the word of input symbols provided by Adam, if Eve
would output concrete output words instead of their abstraction, she
would produce infinitely often non-empty output words 
whose concatenation, together with the input word, belongs to
the relation. 

\subparagraph{State transformation profiles.}

Let $R\subseteq \Sigma^\omega\times \Gamma^\omega$ be a rational relation given by a transducer
$\mathcal T = (Q_\mathcal T,\Sigma,\Gamma,q_0^\mathcal
T,\Delta_\mathcal T,c_\mathcal T)$, and $C_\mathcal T =
\mathrm{img}(c_\mathcal T)$ its set of used priorities.
Let $\mathcal D = (Q_\mathcal D,\Sigma,q_0^\mathcal D,\delta_\mathcal
D,c_\mathcal D)$ be a deterministic parity automaton that recognizes
$\mathrm{dom}(R)$, and $C_\mathcal D = \mathrm{img}(c_\mathcal D)$ its
set of used priorities; $\mathcal D$ can always be constructed from
$\mathcal T$ by projecting away its outputs and by determinizing the
resulting automaton.

Given $u\in\Sigma^*$, its \emph{profile} $P_u$ are all the possible state
transformations it induces for any output. Formally, $P_u \subseteq Q_\mathcal T \times Q_\mathcal T \times C_\mathcal T$ is defined as
$
\bigl\{
(p,q,c) \mid \text{ there is } v\in\Gamma^* \text{ and there is a run } \rho \text{ of the form } \mathcal T\colon p \xrightarrow{u / v} q \text{ with } \mathrm{max\ Occ}(\rho) = c
\bigr\}
$. Profiles can be multiplied as $P_1
\otimes P_2 = \{(p,r,\mathrm{max}\{m,n\})\mid \exists q\colon (p,q,m) \in P_1, (q,r,n) \in P_2\}$.
Given $u_1, u_2 \in \Sigma^*$, it is easy to verify that $P_{u_1u_2} =
P_{u_1} \otimes P_{u_2}$, and $P_\varepsilon$ is neutral for~$\otimes$.

\subparagraph{Finite-state unbounded delay game.} We now present a
two-player $\omega$-regular game $\mathcal G_\mathcal T = (G,\mathit{Win})$
such that if Eve has a winning strategy, then $R$ has a computable uniformizer. In this game,
Adam's actions are to provide input letters, letter-by-letter. Eve's
goal is to construct a sequence of state transformations
$(q_0,q_1,m_1)(q_1,q_2,m_2)\cdots$ such that if the infinite input $\alpha \in \Sigma^\omega$ provided by Adam is in $\mathrm{dom}(R)$, then $(i)$ the maximal priority seen infinitely often in $(m_i)_i$ is even and $(ii)$ $\alpha = u_0u_1\cdots$ for some $u_i \in \Sigma^*$ such that $(q_i,q_{i+1},m_{i+1}) \in P_{u_i}$ for all $i \geq 0$.
As a consequence, all
these finite runs can be concatenated to form an accepting run on
$\alpha/v_0v_1\cdots$, entailing $(\alpha,v_0v_1\cdots)\in R$. One can
then show that if Eve has a strategy to pick the state transformations
while ensuring the latter property, then this strategy can be turned into a
computable function, and conversely. Picking a state transformation is
what we call a \emph{producing action} for Eve. Since a state transformation picked by
Eve may correspond to an arbitrarily long word $u_i$, she also has an
action $skip$ which allows her to wait before making such a producing action. Now, the
difficulty for Eve is to decide when she makes producing actions, in
other words, how to decompose the input $\alpha$, only based on
prefixes of $\alpha$. To that end, before picking a state
transformation, she may need to gather lookahead information from
Adam. Consequently, the vertices of the game manipulate two
consecutive profiles $P_1$ and $P_2$, with the invariant that $P_1$ is
the profile of $u_i$ while $P_2$ is the profile of $u_{i+1}$, when the
input played so far by Adam is $u_0\cdots u_{i+1}$. When Eve knows
enough, she picks a state transformation $(q_i,q_{i+1},m_i)$ in $P_1$,
then $P_1$ becomes $P_2$ and $P_2$ is reset to $P_\varepsilon$. The
inputs of Adam up to the next producing action of Eve form the word
$u_{i+2}$, and so on. The vertices of the game also
store information to decide whether the input belongs to the domain of
$R$ (states of $\mathcal{D}$), the parities $m_i$, as well as the states
$q_0,q_1,\dots$. Formally, the game graph $G = (V,E)$ is composed of vertices of the form
$\bigl( q,c,P_1,P_2,r \bigr) \times \{\adam,\eve\}$\label{page:game}, where
\begin{itemize}
  \item $q \in Q_\mathcal T$, \quad\emph{State reached on the combination of input and output sequence.}
  \item $c \in \{-1\} \cup C_\mathcal T$, \quad\emph{Priorities of the
      state transformations, $-1$ is used to indicate that no state
      transformation was chosen (skip action below).}
  \item $P_1,P_2$, \quad\emph{Profiles obtained from the given lookahead of the input word.}
  \item $r \in Q_\mathcal D$. \quad\emph{State reached on the given lookahead of the input word.}
\end{itemize}
From a vertex of the form $\bigl( q,c,P_1,P_2,r,\adam \bigr)$, Adam has the following actions:
\begin{itemize}
  \item $\xrightarrow{a} \bigl( q,-1,P_1,P_2 \otimes P_a,\delta_\mathcal D(r,a),\eve\bigr)$, for all $a \in \Sigma$.

\quad \emph{Adam provides the next lookahead letter and $P_2$ is
  updated accordingly.}

\end{itemize}
From a vertex of the form $\bigl( q,c,P_1,P_2,r,\eve \bigr)$, Eve has the following actions:
\begin{itemize}
  \item $\xrightarrow{skip} \bigl( q,-1,P_1,P_2,r,\adam \bigr)$, and

\quad \emph{Eve makes a non-producing action, i.e., she waits for further lookahead on the input.}

\item $\xrightarrow{(q,q',c')} \bigl( q',c',P_2,P_\varepsilon,r,\adam \bigr)$, where $(q,q',c') \in P_1$.

\quad \emph{Eve makes a producing action: a state transformation from the first lookahead profile is chosen, the state transformation is applied, and the first profile is consumed.
}
\end{itemize}
The initial vertex of the game is $\bigl( q_0^\mathcal
T,-1,P_\varepsilon,P_{\varepsilon},q_0^\mathcal D,\adam \bigr)$.

Let us now define $\mathit{Win} \subseteq V^\omega$. The condition
makes sure that if the input sequence provided by Adam is in the
domain of $R$, then the sequence of state transformations can be used
to build on accepting run of $\mathcal{T}$ on that input. 
$\mathit{Win} \subseteq V^\omega$ is the set of all plays $\gamma$ satisfying the property
\begin{equation*}
  \mathrm{max}\ \mathrm{Inf}(\mathit{col}_\mathcal D (\gamma)) \text{ is even} \rightarrow \mathrm{max}\ \mathrm{Inf}(\mathit{col}_\mathcal T (\gamma)) \text{ is even},
\end{equation*}
where $\mathit{col}_\mathcal D (\gamma) = c_\mathcal D(\pi^5(\gamma))$, $\mathit{col}_\mathcal T (\gamma) = \pi^2(\gamma)$, and 
$\pi^i(\gamma)$ is the projection of $\gamma$ onto the $i$-th component of each vertex.
It is not difficult to see that $\mathit{Win}$ is $\omega$-regular, e.g., one can design a parity automaton for it.

We explain the intuition behind $\mathit{Win}$.
Our goal is to extract a computable function that uniformizes the relation from a winning strategy.
Intuitively, there is a computable function that uniformizes $R$, if every input word $\alpha \in \mathrm{dom}(R)$ can be read letter-by-letter, and from time to time, a segment of output letters is produced, continuously building an infinite output word $\beta$ such that $(\alpha,\beta) \in R$.
We relate this to $\mathit{Win}$.
Recall that $R$ is defined by $\mathcal T$, and $\mathrm{dom}(R)$ by $\mathcal D$.
Given a play $\gamma$, there is a unique input word $\alpha\in\Sigma^\omega$ that corresponds to $\gamma$.
Since we are looking to build a computable function $f$ with $\mathrm{dom}(f) = \mathrm{dom}(R)$, we care whether $\alpha \in \mathrm{dom}(R)$.
The $\omega$-word $\mathit{col}_\mathcal D (\gamma)$ is equal to $c(\rho_\mathcal D)$, where $\rho_\mathcal D$ is the run of $\mathcal D$ on $\alpha$.
If $\mathrm{max}\ \mathrm{Inf}(\mathit{col}_\mathcal D (\gamma))$ is even, $\alpha \in L(\mathcal D)$, i.e., $\alpha \in \mathrm{dom}(R)$.
An output word $\beta \in \Gamma^\infty$ that corresponds to $\gamma$ is only indirectly defined, instead the play defines a (possibly finite) sequence of state transformations that an output word $\beta$ should induce together with $\alpha$.
How to extract a concrete $\beta$ from $\gamma$ is formally defined in the proof of \cref{thm:gametofunc}.
The $\omega$-word $\mathit{col}_\mathcal T (\gamma)$ contains the relevant information to determine whether $(\alpha,\beta) \in R(\mathcal T)$, i.e., $(\alpha,\beta) \in R$.
In particular, if $\beta$ is finite, $\mathrm{max}\ \mathrm{Inf}
\mathit{col}_\mathcal T (\gamma)$ is $-1$, that means that only
finitely many producing actions have been taken.
If $\mathrm{max}\ \mathrm{Inf} \mathit{col}_\mathcal T (\gamma)$ is even, we have that $(\alpha,\beta) \in R$.
Thus, $\mathit{Win}$ expresses that if $\alpha \in L(\mathcal D)$, then there is some $\beta \in \Gamma^\omega$, which can be built continuously while reading $\alpha$ such that $(\alpha,\beta) \in R$.

We make some remarks about the form of the game, in particular the use of two lookahead profiles, instead of one.
Assume we would have only one profile abstracting the lookahead over Adam inputs.
For simplicity, assume the specification is automatic (i.e., letter-to-letter).
Suppose, so far, Adam and Eve have alternated between providing an input letter and producing an output letter (in the finite-state game, Eve producing letter(s) corresponds to the abstract action of picking state transformations), but now, she needs to wait for more inputs before she can safely output something new.
Suppose that Adam has provided some more input, say the word $u$, and Eve now has enough information about the input to be able to produce something new.
Abstractly, it means that in the game, Adam has given the word $u$ but only its profile $P$ is stored.
Eve might not be able to produce an output of the same length as $u$ (for example, if producing the $i$-th output letter depends on the $(i+k)$-th input letter).
So, she cannot consume the whole profile $P$ (i.e., pick a state transformation in $P$).
What she has to do, is to decompose the profile $P$ into two profiles such that $P = P_1 \otimes P_2$ and pick a state transformation in $P_1$, and then continue the game with profile $P_2$ (and keep on updating it until she can again produce something).
The problem is, firstly, that there is no unique way of decomposing $P$ as $P_1 \otimes P_2$, and secondly, $P_1$ might not correspond to any prefix of $u$.
That is why it is needed to have explicitly the decomposition at hand in the game construction.

\subparagraph{From winning strategies to uniformizers.}
We are ready to state our first positive result: If Eve has a winning strategy in the unbounded delay game $\mathcal G_\mathcal T$, then $R(\mathcal T)$ is uniformizable by a computable function.
In fact, we show a more precise result, namely, that if Eve has a winning strategy, then the relation is uniformizable by a function recognized by a deterministic two-way parity transducer.
Additionally, if the domain of the relation is closed, then a deterministic one-way transducer suffices (see \cref{sec:closed}).
Just as (one-way) transducers extend parity automata with
outputs on their transitions, input-deterministic two-way transducers extend deterministic two-way parity automata with outputs. The reading tape is two-way, but the
output tape is one-way. The class of functions recognizable by 2DFTs
is smaller than the class of computable functions and enjoys many
good algorithmic properties, e.g., decidability of the equivalence
problem~\cite{alur2012regular}.
Note that any function recognizable by a 2DFT is computable, in the sense that it suffices to ``execute'' the 2DFT to get the output.
So, from now on, we may freely say that a function is computable by a 2DFT.

Before stating our first theorem, we remind the reader of \cref{remark:domain}.
As stated, it is not necessary (nor feasible in finite time) to check whether an input stream belongs to the domain.
However, since we employ the notion of uniformization to state our results, we need to properly define a domain of our uniformizer.
Hence, the 2DFT we construct (in the proof of \cref{thm:gametofunc}) has an acceptance condition corresponding to the desired domain.
In order to get the output from the 2DFT in a streaming fashion, the acceptance condition can simply be ignored.

\begin{theorem}
\label{thm:gametofunc}
  Let $R$ be defined by a transducer $\mathcal T$.
  If Eve has a winning strategy in $\mathcal G_\mathcal T$, then $R$ is uniformizable by a function computable by a 2DFT.
\end{theorem}

The proof comprises two parts.
First, we show that if Eve has a winning strategy in $\mathcal G_\mathcal T$, then $R(\mathcal T)$ is uniformizable by a computable function~$f$, see \cref{lemma:strat2func}.
Secondly, we show that the function~$f$ is also computable by a 2DFT, see \cref{lemma:strat22DFT}.

\subsection{Extracting a Computable Function from a Winning Strategy}

\begin{lemma}\label[lem]{lemma:strat2func}
  If Eve has a winning strategy in $\mathcal G_\mathcal T$, then $R(\mathcal T)$ is uniformizable by a computable function $f$.
\end{lemma}

\begin{proof}
Let $\sigma_\exists$ be a winning strategy for Eve in $\mathcal G_\mathcal T$.
This implies that there exists also a finite state winning strategy
for Eve, \modif{because $\mathcal G_\mathcal T$ is an $\omega$-regular game,
and $\omega$-regular games are finite-memory determined, by
B\"uchi-Landweber's Theorem}.

In \cref{alg:cont-function}, we give an algorithm that describes a computable function $f\colon \Sigma^\omega \to \Gamma^\omega$ that uniformizes $R$ which is constructed from a finite state winning strategy given by a strategy automaton $\mathcal S$.
Intuitively, the algorithm plays the game where Adam's actions correspond to an input word $\alpha$ and Eve's actions are determined by the winning strategy given by $S$.
In a play, the lookahead on the input is not stored concretely, but as an abstraction in form of two profiles.
When Eve takes an edge labeled with a state transformation, then the first lookahead profile is removed, the chosen state transformation is applied, the second lookahead profile takes the place of the first one, and the second lookahead profile is built anew.
In the algorithm, additionally, the lookahead on the input that is used to build the two profiles is stored concretely.
When Eve takes an edge labeled with a state transformation, say $(p,q,c)$, we have the information what kind of state transformation should occur for the input $u_1 \in \Sigma^*$ that currently corresponds to the first profile $P_1$.
The algorithm picks an output $v_1 \in \Gamma^*$ such that $\mathcal T\colon p \xrightarrow{u_1 / v_1} q$ and the maximal priority seen on this run is $c$.
Then, in the game, profile $P_2$ takes the place of $P_1$ and the second profile is set to $P_\varepsilon$.
Additionally, the algorithm sets $u_1$ to $u_2$, that is, the input that currently corresponds to $P_2$, and sets $u_2$ to $\varepsilon$.
The stored strings $u_1$ and $u_2$ can get arbitrarily long, but are reset from time to time, as the algorithm is built from a winning strategy, thus producing edges are taken again and again.
Note that the algorithm is guaranteed to produce non-empty output strings again and again, because in \cref{alg:append} a new letter is appended to the string $u_2$ again and again, and in \cref{alg:switch} the string $u_1$ is set to $u_2$ from time to time.
Thus, the string $u_1$ is infinitely often not empty.
In fact, $u_1$ is only empty the first time an output word is chosen.

The correctness of the algorithm can easily be seen.
Assume that $\gamma$ is a play according to the winning strategy $\sigma_\exists$ and assume that the corresponding input word $\alpha \in \Sigma^\omega$, spelt by Adam's actions, is part of the domain of the relation, i.e., $\alpha \in L(\mathcal D)$.
Hence, since $\mathrm{max}\ \mathrm{Inf}(\mathit{col}_\mathcal D (\gamma))$ is even, also $\mathrm{max}\ \mathrm{Inf}(\mathit{col}_\mathcal T (\gamma))$ is even, because $\gamma \in \mathit{Win}$.
This means that that Eve takes infinitely many producing edges, otherwise $\mathrm{max}\ \mathrm{Inf}(\mathit{col}_\mathcal T (\gamma))$ would be $-1$.
Thus, the algorithm infinitely often prints a finite output word building an output word $\beta$.
We have that $(\alpha,\beta) \in R$, i.e., $(\alpha,\beta) \in R(\mathcal T)$, because $\mathrm{max}\ \mathrm{Inf}(\mathit{col}_\mathcal T (\gamma))$ is even and $\beta$ was build such that there exists a run $\rho_\mathcal T$ of $\mathcal T$ on $\alpha / \beta$ such that $\mathrm{max}\ \mathrm{Inf}(\mathit{col}_\mathcal T (\gamma)) = \mathrm{max}\ \mathrm{Inf}(c(\rho_\mathcal T))$.
(Note that we assume that $\mathcal T$ defines a relation $R \subseteq \Sigma^\omega \times \Gamma^\omega$ (controlled with the parity condition) which guarantees that $\beta$ is build from finite words of which infinitely many are non-empty, because $\mathrm{max}\ \mathrm{Inf}(c(\rho_\mathcal T))$ is even.)

We now argue that algorithm yields a computable function.
The algorithm handles an input sequence in a letter-by-letter fashion,
the only ambiguity in the algorithm is in \cref{alg:chose-output}, 
that requires the algorithm to pick an output block $v_1$ such that $\underbrace{\mathcal T\colon p \xrightarrow{u_1 / v_1} q}_{\text{max prio is }c}$.
We explain that this is indeed computable by a deterministic machine as described in \cref{def:computable}.
Clearly, for each $(p,q,c)$ there exists a transducer, say $\mathcal T_{(p,q,c)}$, that recognizes $\{ (u,v) \mid \underbrace{\mathcal T\colon p \xrightarrow{u / v} q}_{\text{max prio is }c}\}$.
To pick some $v_1$ for a given $u_1$, a machine $M$ as in \cref{def:computable} can work as follows.
It is easily seen that the set $L = \{ v \mid \underbrace{\mathcal T\colon p \xrightarrow{u_1 / v} q}_{\text{max prio is }c}\}$ is regular, and hence computing some $v_1$ can be done by using a standard non-emptiness checking algorithm on $L$.
\end{proof}

\begin{algorithm}[t]
  \begin{algorithmic}
    \REQUIRE $\alpha\in\Sigma^\omega$, $G$ game arena, $\mathcal S = (M,V,m_0,\delta,\mu)$ strategy automaton

    \STATE $m \gets m_0$ \COMMENT{current state of the strategy automaton}
    \STATE $u_1 \gets \varepsilon$ \COMMENT{first input block}
    \STATE $u_2 \gets \varepsilon$ \COMMENT{second input block}
    \STATE $s_{\mathit{prev}} \gets s_0$, initial vertex of $G$ \COMMENT{previous vertex in the game}
    \STATE $s_{\mathit{cur}} \gets s_0$ \COMMENT{current vertex in the game}
    \STATE $a \gets \alpha(0)$ \COMMENT{current letter of $\alpha$}
    \STATE $action$ \COMMENT{current action of Eve}
    \WHILE{\TRUE}
    \STATE $u_2 \gets u_2.a$ \label[line]{alg:append} \COMMENT{append letter to second block}
    \STATE $s_{\mathit{cur}} \gets s$ if $s_{\mathit{cur}} \xrightarrow{a} s \in E$ \COMMENT{update game vertex according to Adam's action}
    \STATE $m \gets \delta(m,s_{\mathit{cur}})$ \COMMENT{strategy automaton is updated with Adam's action}
    \STATE $s_{\mathit{prev}} \gets s_{\mathit{cur}}$
    \STATE $action \gets \mu(m,s_{\mathit{cur}})$ \COMMENT{strategy automaton yields Eve's action}
    \STATE $s_{\mathit{cur}} \gets s$ if $s_{\mathit{cur}} \xrightarrow{action} s \in E$ \\\COMMENT{updated game vertex is of the form $(\cdot,\cdot,P_{u_1},P_{u_2},\cdot,\cdot)$}
    \STATE $m := \delta(m,s_{\mathit{cur}})$ \COMMENT{strategy automaton is updated according to Eve's action}
    \IF{$action$ is of the form $(p,q,c)$ \COMMENT{Eve took a producing edge}} 
    \STATE choose output block $v_1 \in \Gamma^*$ such that $\mathcal T\colon p \xrightarrow{u_1 / v_1} q$ with max prio $c$ \label[line]{alg:chose-output}
    \STATE   $u_1 \gets u_2$ \label[line]{alg:switch} \COMMENT{first input block becomes second}
    \STATE $u_2 \gets \varepsilon$  \COMMENT{second block is emptied}
    \PRINT $v_1$ \COMMENT{produce output block} \label[line]{alg:print}
    \ENDIF
    \STATE $a \gets \alpha.\mathit{nextLetter()}$ \COMMENT{read next input letter}
    \ENDWHILE

    \ENSURE $\beta \in \Gamma^\infty$, if $\alpha \in \mathrm{dom}(R)$, then $(\alpha,\beta) \in R$
    \end{algorithmic}
    \caption{Algorithm computing a function that uniformizes $R$. The algorithm is described in the proof of \cref{lemma:strat2func}.}
    \label{alg:cont-function}
 \end{algorithm}

 \subsection{Extracting a 2DFT from a Winning Strategy}


We have seen that \cref{alg:cont-function} uses unbounded memory.
The following lemma shows that this unbounded memory can be traded to for finite memory at the cost of having input two-wayness.

\begin{lemma}\label[lem]{lemma:strat22DFT}
    If Eve has a winning strategy in $\mathcal G_\mathcal T$, then $R(\mathcal T)$ is
    uniformizable by a~2DFT.
\end{lemma} 

  \begin{proof}
    We first give an intuition how to translate a winning strategy into a 2DFT.
    The main idea is to use two-wayness to encode finite lookahead over the input: the reading head goes forward to gather input information, and then must return to the initial place where the lookahead was needed to transform the input letter.
    The difficulty is for the 2DFT to return to the correct position, even though the lookahead can be arbitrarily long.
    
    In order to find the correct positions, we make use of a finite-state strategy automaton for Eve's winning strategy in the following sense.
    A (left-to-right) run of the strategy automaton on the input word yields a unique segmentation of the input, such that segments $i$ and $i+1$ contain enough information to determine the output for segment $i$.
    The idea is to construct a 2DFT that simulates the strategy automaton in order to find the borders of the segments.
    If the 2DFT goes right, simulating a computation step of the deterministic strategy automaton is easy.
    Recovering the previous step of the strategy automaton when the 2DFT goes left is non-trivial, it is possible to compute this information using the Hopcroft-Ullman construction presented in~\cite{hopcroft1967approach}.
    We show that having the knowledge of the profiles of segments $i$ and $i+1$ is enough to deterministically produce a matching output for segment $i$ on-the-fly going from left-to-right over segment $i$ again.

    Now, we present a detailed proof.
      Assume that Eve has a finite-state winning strategy in $\mathcal G_\mathcal T$.
      \cref{alg:cont-function} describes a computable function
      that uniformizes $R(\mathcal T)$ built from a finite-state winning strategy
      given by strategy-automaton $\mathcal S$. At \cref{alg:chose-output}, this algorithm
      makes a choice of output $v_1$. We slightly modify this choice: given
two states $p$ and $q$ of the transducer recognizing $R(\mathcal T)$ and a
priority $c$, the relation
\[
    R_{(p,q,c)} = \{ (u,v) \mid \mathcal T\colon p \xrightarrow{u_/v} q \text{ and the maximal priority seen is $c$}\}.
\]
is a rational relation of finite words. It is known that every
rational relation on finite words has an effective rational uniformizer, see, e.g., \cite{DBLP:books/lib/Berstel79}. So let
$f_{(p,q,c)}$ denote such a rational function, as obtained
in~\cite{DBLP:books/lib/Berstel79}. We modify \cref{alg:chose-output} of the algorithm
so that it chooses output $v_1 = f_{(p,q,c)}(u_1)$.

We denote by $f$
the function computed by the (modified) algorithm. We show that $f$ is recognizable by an
input-deterministic two-way transducer.
The main difficulty is that in the game, there is a desynchronization
between the moment Eve makes a producing action (a triple $(q,q',c')$) and the interval of the play during which the
input corresponding to this producing action has been provided by Adam. In
\cref{alg:cont-function}, this is easily solved because
unbounded memory can be used to store these intervals. An input-deterministic two-way transducer must solve this by reading far enough into the input such that the producing action is known, then the transducer must deterministically go back to the input segment for which this producing action was chosen. 
The main difficulty lies in deterministically finding the right beginning and end points of such a segment when going back.

We provide a construction in two steps.
First, we describe a simpler setting, where an input sequence is enriched by a function $f_{annot}$ with information about the corresponding play in the game consistent with the strategy defined by the strategy automaton.
Those information are: vertices of the play, actions
performed during the play, current state of the strategy automaton. Recall that the game alternates
between Adam and Eve actions and start by
an Adam action. Adam's actions are input symbols in $\Sigma$ while Eve's
actions are either $skip$ or a triple $(q,q',c')$ of some profile. Let
us denote by $A_\exists$ Eve's actions. Then, the type of $f_{annot}$ is
\[
  f_{annot}\ :\ \text{dom}(R)\subseteq \Sigma^\omega\rightarrow
\Bigl(\Sigma \times \bigl((V \times M)\times (V \times M)\times (A_\exists \times V \times M)\bigr)\Bigr)^\omega
\]
where $V$ are the vertices of the game and $M$ are the states of the strategy automaton.
The function returns an input sequence annotated with sufficient information to reconstruct a play in the game
induced by the strategy applied on this input sequence, and the run of the strategy automaton on this
play. More
precisely, for a single letter $a$ the annotation contains the vertex of the
game and the state of the strategy automaton, Adam's action (which is
$a$), the target vertex and target state induced by Adam's action,
Eve's action as prescribed by the strategy automaton, the target
vertex and target state induced by Eve's action.

We explain how to construct an input-deterministic two-way transducer that
recognizes the function $g$ which takes an annotated input sequence
$f_{annot}(\alpha)$ and maps it to $f(\alpha)$ for all $\alpha \in \mathrm{dom}(R)$.
Subsequently, we explain how to modify the transducer to recognize the function $f\colon \alpha \mapsto f(\alpha)$ for all $\alpha \in \mathrm{dom}(R)$.

An annotated string $\alpha'$ based on $\alpha$ defines a unique segmentation $u_0u_1\cdots$ of $\alpha$ where the endpoint of $u_i$ is annotated with a producing action of Eve for all $i \geq 1$ ($u_0$ is defined as $\varepsilon$).
The producing action $(q,q',c')$ seen at the end of $u_i$ means that for the segment $u_{i-1}$ an output $v_{i-1}$ must be produced such that $\mathcal T\colon q \xrightarrow{u_{i-1}/v_{i-1}} q'$ and the maximal priority seen is $c'$.
In the annotated sequences it is easy for an input-deterministic two-way transducer to advance to the end of the segment $u_i$ and go back to the beginning of segment $u_{i-1}$ for all $i\geq 1$ because all segment endpoints are annotated with a producing action (the intermediate positions are annotated with $skip$).

We now explain how an input-deterministic two-way transducer can produce the
desired output $v_{i-1}$ for the segment $u_{i-1}$ with
annotations. First, we show that for any state transformation
$(q,q',c')$, the function $f_{(q,q',c')}$ of finite words can be
computed by an input-deterministic two-way transducer.

Rational functions over finite words are recognizable by input-deterministic two-way transducers, which can be seen as follows.
In \cite{DBLP:conf/icla/Filiot15}, it is shown that a function from finite words to finite words is rational iff it is definable as an order-preserving MSO-transduction.
In \cite{engelfriet2001mso}, it is shown that a function from finite words to finite words is definable as an MSO-transduction iff it is definable by an input-deterministic two-way transducer.
Hence, it follows that rational functions over finite words are recognizable by input-deterministic two-way transducers.

Such a transducer can be constructed. Let $\mathcal T_{(q,q',c')}$ denote the input-deterministic two-way transducer that recognizes $f_{(q,q',c')}$.

Consequently, when the input-deterministic two-way transducer on $\alpha'$
has returned to the beginning of the annotated segment $u_{i-1}$
(after it went to the endpoint of $u_i$ to get the state
transformation $(q,q',c')$ corresponding to $u_{i-1}$), it can run the input-deterministic two-way transducer $\mathcal T_{(q,q',c')}$ on $u_{i-1}$ that will produce a suitable $v_{i-1}$.
The difficulty is to stay inside of $u_{i-1}$ for this transformation.
However, the annotations in $\alpha'$ clearly mark the endpoints of
$u_{i-2}$ and $u_{i-1}$ (by state transformations instead of $skip$
action) , so we can run the transducer $\mathcal T_{(q,q',c')}$
designed to work on finite words because we have clear markers on the
bounds of the segment~$u_{i-1}$.

After producing the output $v_{i-1}$, the two-way transducer can advance to the end of $u_{i+1}$, obtain the producing action that should be applied to $u_{i}$, go back to the beginning of $u_i$, apply it and so on.

The priorities that have to be seen on a run of this transducer can be derived from the annotations.
Annotation $skip$ means priority $-1$, and an annotation of the form $(q,q',c')$ means priority $c'$ must be seen.
Thus, we have shown that there exists an input-deterministic two-way
transducer that recognizes $g$. Let $\mathcal T_g$ denote this transducer.

We now explain how we can get rid of the annotations.
When we read an input $\alpha\in \Sigma^\omega$, we can determine the
annotations assigned by $f_{annot}$ deterministically on-the-fly using
the strategy automaton, because the strategy automaton is deterministic.
Therefore we slightly have to extend the strategy automaton such that
it additionally stores in its state space the current vertex (that is
the input alphabet of the strategy automaton) and Eve's chosen action (determined by the strategy automaton via it's next move function).

Let $\mathcal A$ denote this enhanced strategy automaton. It remains
deterministic. The deterministic automaton $\mathcal A$ has the
property that when going from left-to-right in $\alpha$, it computes
the annotation in the sense that it corresponds to the information
stored in the state space of $\mathcal A$. So, by running
$\mathcal{A}$ from left-to-right, we have access to the annotation.

Our goal is now to extend $\mathcal T_g$ such that it computes the run of $\mathcal A$ on $\alpha$.
Clearly, when going from a position $j$ to $j+1$ is is possible to compute the state of $\mathcal A$ on $j+1$ when the state of $\mathcal A$ on $j$ is known, however, when going from a position $j$ to $j-1$, we also want to obtain the state of $\mathcal A$ that was seen on $j-1$ based on the state that was seen on $j$.

It is possible to compute this information with an input-deterministic two-way transducer using the Hopcroft-Ullman construction presented in \cite{hopcroft1967approach}.
The construction describes how to simulate a deterministic one-way automaton together with a two-way automaton.
It is described for two-way automata on finite words, however, in
order to obtain the state of the one-way automaton at position $j-1$
when one is at position $j$, the Hopcroft-Ullman construction only
needs to re-read some positions to the left of $j$, so the same
construction can be applied in the infinite word setting.

Hence, $\mathcal T_g$ combined with $\mathcal A$ using the Hopcroft-Ullman construction can be run on words $\alpha\in\Sigma^\omega$ instead of $f_{annot}(\alpha)$ and computes the desired function $f$.
\end{proof}

We like to mention that \cref{lemma:strat22DFT} could have been obtained in another way.
Carton and Dou{\'{e}}neau{-}Tabot \cite{DBLP:conf/mfcs/CartonD22} have shown that every computable rational function can be computed by some 2DFT.
Their result provides a translation from computable rational functions to 2DFTs going through intermediate computation models.
To employ their result, we need to show that the computable function $f$ obtained from a winning strategy for Eve is rational (i.e., a non-deterministic one-way transducer recognizing $f$ exists).
The proof is similar to the proof of \cref{lemma:strat22DFT}.
The main difference is that we use the two-wayness to deterministically determine a priori (by obtaining some lookahead on the input) which input transformation should be realized on a finite input segment.
Without two-wayness, we use non-determinism to guess and output the correct input transformation.
The correctness is then verified a posteriori.

However, we decided in order to obtain \cref{lemma:strat22DFT} to give
a direct construction to be self-contained and build a 2DFT with
better \modif{state} complexity than the 2DFT that would result from the translations given in \cite{DBLP:conf/mfcs/CartonD22}.

\subsection{Complexity Analysis}\label{app:complexity}

\begin{lemma}
\label[lem]{lemma:complexity}
  Deciding whether Eve has a winning strategy in $\mathcal G_\mathcal T$ is in \upshape{\textsc{ExpTime}}.
\end{lemma}

\begin{proof}
Two-player $\omega$-regular games are decidable (see, e.g., \cite{gradel2002automata}).
The claimed upper bound is achieved by representing the winning condition as a deterministic parity automaton, carefully analyzing its size, and then solving a parity game.

From $\mathcal{G}_\mathcal T$ we can obtain an equivalent parity game of size $|G| \cdot |\mathcal W|$, where $\mathcal W$ is a deterministic parity automaton that recognizes the set $\mathit{Win}$.

First, we analyze the size of $G = (V,E)$.
A vertex is an element of 
\[
  Q_\mathcal T \times (C_\mathcal T \cup \{-1\}) \times \mathcal P \times \mathcal P \times Q_\mathcal D \times \{\adam,\eve\},
\]
where $\mathcal P$ is the set of all profiles.
A profile is a subset of $Q_\mathcal T \times Q_\mathcal T \times C_\mathcal T$, thus, the set of profiles is of size exponential in $|Q_\mathcal T|$ and $|C_\mathcal T|$.
We assume that the domain automaton $\mathcal D$ is obtained from $\mathcal T$ by projection and determinization.
A deterministic parity automaton with $2^{\mathcal O (n \ \mathrm{log}\ n)}$ states and $\mathcal O(nk)$ priorities can be constructed from a nondeterministic parity automaton with $n$ states and $k$ priorities using Safra's construction \cite{safra1992exponential}.
We obtain that $\mathcal D$ has a state space of size exponential in $|Q_\mathcal T|$ and priorities of size linear in $|Q_\mathcal T| \cdot |C_\mathcal T|$.
It follows that $V$ and $E$ are of size exponential in $|Q_\mathcal T|$ and $|C_\mathcal T|$.
All in all, since $|C_\mathcal T| \leq |Q_\mathcal T|$, $|V|$ and $|E|$ are exponential in $|Q_\mathcal T|$.

Secondly, we analyze the size of $\mathcal W$.
Recall that $\mathit{Win}$ contains plays $\gamma$ that satisfy
\[
    \mathrm{max}\ \mathrm{Inf}(\mathit{col}_\mathcal D (\gamma)) \text{ is even} \rightarrow \mathrm{max}\ \mathrm{Inf}(\mathit{col}_\mathcal T (\gamma)) \text{ is even}.
\]
We reformulate this as
\[
    \mathrm{max}\ \mathrm{Inf}(\mathit{col}_\mathcal D (\gamma)) \text{ is odd} \ \vee\  \mathrm{max}\ \mathrm{Inf}(\mathit{col}_\mathcal T (\gamma)) \text{ is even}.
\]
A deterministic parity automaton for the second disjunct has a state set of size linear in $|Q_\mathcal T|$ with priorities $C_\mathcal T$.
The first disjunct simply states that the input sequence $\alpha$ associated to $\gamma$ is in $\mathrm{dom}(f)$, i.e., $\alpha \in L(\mathcal D)$.
A nondeterministic parity automaton for $L(\mathcal D)$ has a state set of size linear in $|Q_\mathcal T|$ and priorities $C_\mathcal T$.
Thus, a nondeterministic parity automaton for the conjunction has a state set of size linear in $|Q_\mathcal T|$ and priorities $C_\mathcal T$.
Using Safra's construction, we obtain a deterministic variant with a state set of size exponential in $|Q_\mathcal T|$ with priorities of size linear in $|Q_\mathcal T| \cdot |C_\mathcal T|$.
Let $\mathcal W$ denote this automaton with state set $Q_\mathcal W$.
The alphabet of $\mathcal W$ is $V$, thus, the alphabet size is exponential in $|Q_\mathcal T|$.

The parity game with arena $G \times \mathcal W$ has $|V|\cdot|Q_\mathcal W|$ many vertices and $|E|\cdot|Q_\mathcal W|$ many edges.
Thus, its vertices and edges are of size exponential in $|Q_\mathcal T|$.
The game has priorities of size linear in $|Q_\mathcal T| \cdot |C_\mathcal T|$.

A parity game with $n$ vertices, $m$ edges, and $k$ priorities can be solved in time $\mathcal O(mn^{k/3})$, see \cite{schewe2007solving}.
Consequently, we obtain that our parity game is solvable in time exponential in $|Q_\mathcal T| \cdot |C_\mathcal T|$.
Since $|C_\mathcal T| \leq |Q_\mathcal T|$, the overall time complexity is exponential in $|Q_\mathcal T|$.
\end{proof}
\subsection{Extracting a 1DFT from a Winning Strategy with Bounded Delay}

We state a lemma about bounded delay. First, we introduce the notion of bounded  delay 1DFT.
Let $\mathcal T = (Q,\Sigma,\Gamma,q_0,\Delta,c)$ be a 1DFT and $K\in\mathbbm{N}$. We say that~$\mathcal T$ is \emph{$K$-delay} if, intuitively, the output production is late at most $K$ steps and never ahead, i.e., for all runs $(p_0,u_0,v_0,p_1)(p_1,u_1,v_1,p_2)\dots (p_{n-1},u_{n-1},v_{n-1},p_n)\in \Delta^*$, it holds that $|u_0u_1\dots u_{n-1}|-|v_0v_1\dots v_{n-1}|\leq K$.

\begin{lemma}
  \label[lem]{lemma:boundedstrat}
  Let $R$ be defined by a transducer $\mathcal T$.
  If there exists $\ell \geq 0$, such that Eve has a winning strategy
  in $\mathcal G_\mathcal T$ with at most $\ell$ consecutive
  \emph{skip}-moves, then $R$ is uniformizable by a function $f$
  computable by a 1DFT $\mathcal T_f$ . Moreover, if $\mathcal T$ is
  letter-to-letter, then $\mathcal T_f$ is $\ell$-delay.
\end{lemma}

\begin{proof}
Intuitively, such a strategy yields a function computable by a 1DFT, because the needed lookahead (as it is bounded) can be stored in the state space.

    Let $\sigma_\exists$ be such a winning strategy for Eve in $\mathcal G_\mathcal T$ (i.e., Eve never makes more than $\ell$ consecutive \emph{skip}-moves).
  We can use \cref{alg:cont-function} to obtain a computable function $f$ that uniformizes $R(\mathcal T)$.
  We now explain that $f$ is in fact sequential, i.e., recognizable by a 1DFT.
  
  The construction of \cref{alg:cont-function} yields that $f$ is such that it reads a non-empty input block of length at most $\ell$, and then produces an output block (a canonical choice of size linear in $\ell$ is possible), and so on.
  It directly follows that $f$ is sequential, and a 1DFT $\mathcal
  T_f$ that
  recognizes $f$ can be effectively constructed. 
  If additionally $\mathcal T$ is letter-to-letter, then the
  strategy, after
  reading a non-empty input block of length at most $\ell$, picks an output block of the same size.
  Therefore, $\mathcal T_f$ is $\ell$-delay. 
  \phantom{blub}
  \end{proof}
  
  \subsection{A Note on the Incompleteness of the Decision Procedure}

  \begin{remark}
    The converse of \cref{thm:gametofunc} is not true.
  \end{remark}

  \modif{First, the existence of a computable uniformizer does not imply the
  existence of a winning strategy, otherwise, the $\RAT$-synthesis
  problem would be decidable, which is a contradiction to
  \cref{thm:undecidable}. The following example strengthens this
  result: it shows that even if there is a 2DFT uniformizer, it does
  not necessarily imply the existence of a winning strategy.}

  \begin{example}
    Consider the identity function $f\colon \{a,b\}^\omega \to \{a,b\}^\omega$ such that all inputs with either finitely many $a$ or $b$ are in the domain.
    A (badly designed) letter-to-letter transducer $\mathcal T$ that recognizes $f$ has five states $S,A,B,A',B'$, where $S$ is the starting state, $A,B$ (resp.\ $A',B'$) are used to recognize finitely many $b$ (resp.\ $a$), and from $S$, the first input/output letter non-deterministically either enters $A$ or $A'$.
    In a play in $\mathcal G_\mathcal T$, at some point, Eve must make her first output choice, i.e., she starts to build a run of $\mathcal T$.
    This choice fixes whether the run is restricted to $A,B$ or $A',B'$.
    No matter Eve's choice, Adam can respond with an infinite sequence of either only $a$ (for $A,B$) or $b$ (for $A',B'$), making it impossible to build an accepting run.
    Thus, Eve has no winning strategy, but clearly the function $f$ is
    computable \modif{by a 2DFT}.
  \end{example}
  
  While in the above example, the point of failure is clearly the bad presentation of the specification, this is not the case in general.
  Recall the proof sketch of \cref{thm:undecidable}, where we provide a reduction from Post's correspondence problem.
  A non-deterministic transducer constructed from a given PCP instance $u_1,v_1,\dots,u_n,v_n$ can guess whether the input word contains infinitely many $a$, and accordingly either checks that the output begins with $u_{i_1}\cdots u_{i_k}$ for input sequences beginning with indices $i_1\cdots i_k$, or checks that it does not begin with a prefix equal to $v_{i_1}\cdots v_{i_k}$.
  As detailed in the proof sketch, if the PCP instance has no solution, there is a computable uniformization, however, using the same argumentation as in the above example, such a transducer would make it impossible to have a winning strategy.
  In order to have a winning strategy, the transducer must be changed such that it checks at the same time whether the output starts with $u_{i_1}\cdots u_{i_k}$ and does not start with something equal to $v_{i_1}\cdots v_{i_k}$ for input sequences beginning with $i_1\cdots i_k$. 
  In general, depending on the PCP instance, it is not possible to make both checks in parallel.
\section{A Sufficient Condition for Completeness}
\label{sec:completeness}

\modif{\cref{thm:gametofunc} yields a procedure which is sound but not
complete for solving the \RAT-synthesis problem. From a transducer
$\mathcal T$, construct the $\omega$-regular game
$\mathcal{G}_{\mathcal{T}}$ and solve it: if Eve wins, then $\mathcal
T$ is uniformizable by a computable function (even computable by a
2DFT). Otherwise, nothing can be concluded.}

In this section, we show that the procedure is complete for two known and expressive classes of rational relations, namely the class of automatic relations ($\AUT$), which are for example used as specifications in Church synthesis, as well as the class of deterministic rational relations ($\DRAT$) \cite{sakarovitch2009elements} (to be formally defined below), see \cref{cor:main}.

To arrive at these results, we define a structural restriction on transducers that turns out to be a sufficient condition for completeness.
Let $\mathcal T$ be a transducer.
An \emph{input} (resp.\ \emph{output}) state is a state $p$ from which there exists an outgoing transition $(p,u,v,q)$ such that $u\neq\varepsilon$ (resp.\ $v\neq\varepsilon$).
The set of input (resp.\ output) states is called $Q_i$ (resp.\ $Q_o$).
A transducer $\mathcal T$ has \emph{property $\mathcal P$} if 
\modif{
the following two conditions hold:
\begin{enumerate}
  \item The transition set is a finite subset of $Q \times (\Sigma \cup \{\varepsilon\}) \times \Gamma^* \times Q$.
  In words, a transition reads one or no input letters.\footnote{We thank the anonymous reviewer for alerting us about the necessity of this conditon.}
  \item For all words $u\in\Sigma^*$, all $v_1,v_2\in\Gamma^*$ such that $v_1$ is a prefix of $v_2$ the following holds:
  \begin{equation*}
     \text{ if } \mathcal T\colon p \xrightarrow{u/v_1} q, \mathcal T\colon p \xrightarrow{u/v_2} r, \text{ and } q,r \text{ are input states, then } q = r.
  \end{equation*}
\end{enumerate}
}

\modif{
The property $\mathcal P$ ensures that given $\alpha \in
\Sigma^\omega$ and $\beta \in \Gamma^\omega$ such that there is a run
of $\mathcal T$ with input $\alpha$ and output $\beta$, for each
prefix $u \in \Sigma^*$ of $\alpha$ all runs $\mathcal T\colon q_0
\xrightarrow{u/v}$ that end in a input state such that $v \in
\Gamma^*$ is a prefix of $\beta$ lead to the same input state.
}
This implies that input prefixes together with (long enough) output
prefixes are sufficient to determine the beginning of an accepting run
up to the last state reached before reading the next input letter
following the prefix. This allows us to show the following.

\begin{theorem}
\label{thm:functogame}
Let $R$ be defined by a transducer $\mathcal T$ with property $\mathcal P$.
If $R$ is uniformizable by a computable function, then Eve has a winning strategy in $\mathcal G_\mathcal T$.
\end{theorem}

We give the formal proof of this result in \cref{app:functogame} and provide a proof sketch first to not interrupt the flow of this section.

\begin{proof}[Proof sketch]
  In fact, we explain how to construct a winning
  strategy from a continuous (a computable function is always continuous, see \cref{app:functogame}) uniformizer $f$ of $R$. Given $\alpha \in
  \mathrm{dom}(R)$ and $f(\alpha)$, we show that it is possible to decompose the input $\alpha$
  into $u_0u_1\cdots$ and the output $f(\alpha)$ into $v_0v_1\cdots$
  such that there exists an accepting run $\mathcal T\colon q_0 \xrightarrow{u_0/v_0} q_1
  \xrightarrow{u_1/v_1} q_2 \cdots$ where each $q_i$ is an input state for $i > 0$. Moreover, this decomposition and
  run can be determined in a unique way and on-the-fly, in the sense
  that a factor $u_i/v_i$ only depends on the factors
  $u_0/v_0$,$\ldots$, $u_{i-1}/v_{i-1}$. 
  This makes it possible for Eve to pick a corresponding state
  transformation sequence $(q_0,q_1,c_0)(q_1,q_2,c_1)\cdots$ which is
  only dependent on \emph{the so far seen} actions of Adam spelling
  $u_0u_1\cdots$. The main idea to determine the $u_i$ is to look at
  the indices $j$ for which the longest common prefix of the sets $S_j = \{
  f(\alpha({:}j)\beta)\mid \alpha({:}j)\beta\in\mathrm{dom}(R)\}$ strictly
  increases. Given $u_i$, the output $v_0 v_1 \cdots v_i$ is any common prefix of the sets $S_j$, such that a run $\mathcal T\colon q_i \xrightarrow{u_i/v_i}$ is defined and its target is an input state.
  The fact that $\mathcal T$ has property $\mathcal P$ guarantees that each of these runs has the same target, thus, the next state transformation $(q_i,q_{i+1},c_i)$ is uniquely determined.
\end{proof}

We formally introduce \AUT and \DRAT.
A relation is \emph{deterministic rational} if it is recognized by a transducer where $Q_{i}$ and $Q_o$ partition its state space, and its transition relation $\Delta$ is a function $\left(Q_i \times \Sigma \times \{\varepsilon\} \to  Q\right) \cup \left(Q_o \times \{\varepsilon\} \times \Gamma \to Q\right)$.
It is \emph{automatic} if additionally $\Delta$ strictly alternates between $Q_i$ and $Q_o$ states.
It is easy to see that every \DRAT-transducer (and a fortiori every \AUT-transducer) satisfies the property $\mathcal P$.
In general, given any transducer $\mathcal T$, we do not know if it is decidable whether $\mathcal T$ has property $\mathcal P$.

\subparagraph{Main result.} We now state our main result: Asking for the existence of a uniformization which is
computable by a Turing machine or computable by an input-deterministic two-way transducer (2DFT), are equivalent questions, as long as specifications are \DRAT relations.
Moreover, these questions are decidable. 

\begin{corollary}
\label[cor]{cor:main}
  Let $R$ be defined by a \DRAT-transducer $\mathcal T$.
  The following are equivalent:
  \begin{enumerate}
    \item $R$ is uniformizable by a computable function.
    \item $R$ is uniformizable by a function computable by a 2DFT.
    \item Eve has a winning strategy in $\mathcal G_\mathcal T$.
  \end{enumerate}
\end{corollary}

Note that the above result also holds for the slightly more general case of relations given by transducers with property $\mathcal P$.

\begin{theorem}
  \label{thm:complexity}
 The \AUT- and \DRAT-synthesis problems are \upshape{\textsc{ExpTime}}-complete.
  \end{theorem}

\begin{proof}
Membership in \textsc{ExpTime} directly follows from \cref{lemma:complexity,cor:main}.
In \cite{DBLP:journals/corr/KleinZ14} it was shown that this problem is \textsc{ExpTime}-hard in the particular case of automatic relations with total domain, so the lower bound
applies to our setting.
\end{proof}

\subsection{From Continuous Functions to Strategies}
\label{app:functogame}
In the upcoming proofs, we make use of the notion of continuity instead of computability, as this simplifies the proofs.
We first recall its definition.
A function is called \emph{continuous} if 
\begin{equation}\label{eq:continuous}
\forall \alpha \in \mathrm{dom}(f)\ \forall i \in \mathbbm N\ \exists j \in \mathbbm N\  \forall \beta \in \mathrm{dom}(f)\colon |\alpha \wedge \beta| \geq j \rightarrow |f(\alpha) \wedge f(\beta)| \geq i.
\end{equation}

It is easy to see that every computable function is also continuous.

\begin{example}
    Consider the function $f_1$ of \cref{ex:cont-uni}. 
    $f_1$ is continuous, because the $i$-th output
    symbol only depends on the $max(i,n+1)$ first input symbols. 
Consider the function $f_2$ of \cref{ex:cont-uni}. 
  The function $f_2$ is clearly rational, but it is not continuous.
  We verify that $f_2$ is not continuous, let $\alpha_n$ denote $a^nb^\omega$, we have that $|\alpha_n \wedge a^\omega| = n$ and $|f_2(\alpha_n) \wedge f_2(a^\omega)| = 0$ for all $n\in\mathbbm N$.
  Thus, $f_2$ is not continuous.
\end{example}

We introduce an auxiliary function used in the proof of \cref{thm:functogame} that yields the ``best'' (meaning most beneficial for Eve) priority associated to a profile and two states.

\begin{definition}[best]\label[def]{def:best}
We define the function $\mathit{best}\colon \mathcal P \times Q_\mathcal T \times Q_\mathcal T \to c_\mathcal T$, where $\mathcal P$ denotes the set of all profiles.
We let 
\[
    \mathit{best}(P,p,q) =  
    \begin{cases}
        \mathrm{max}\{ c \mid (p,q,c) \in P, \text{ $c$ is even}\} & \\
        & \hspace*{-3cm}\text{if there is an even $c'$ s.t.\ $(p,q,c') \in P$} \\
        \mathrm{min}\{ c \mid (p,q,c) \in P, \text{ $c$ is odd}\} & \\
        & \hspace*{-3cm}\text{if there is no even $c'$ s.t.\ $(p,q,c') \in P$}
    \end{cases}
\]
\end{definition}

The meaning of $\mathit{best}(P,p,q) = c$ is that it for an input word
$u$ that has profile~$P$, \modif{the best priority Eve can achieve
  when going from $p$ to $q$ via some $u/v$ is $c$. This $c$ is the
  maximal even priority if there is some, while $c$ is the minimal odd
  priority if there is no even priority that she can achieve}.

As already mentioned in the proof sketch of \cref{thm:functogame}, we show a stronger result, namely, the following result.

\begin{theorem}
  \label{lemma:func2strat}
  Let $R$ be defined by a transducer $\mathcal T$ with property $\mathcal P$.
  If $R$ is uniformizable by a continuous function $f$, then Eve has a winning strategy in $\mathcal G_\mathcal T$.
\end{theorem}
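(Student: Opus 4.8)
The plan is to extract, from a continuous uniformizer $f$ of $R$, a winning strategy for Eve in $\mathcal G_\mathcal T$ by simulating the canonical decomposition of input/output pairs into blocks that are compatible with accepting runs of $\mathcal T$. Concretely, for $\alpha\in\mathrm{dom}(R)$ set $\beta=f(\alpha)$ and define the cut positions $j_0<j_1<\cdots$ of $\alpha$ to be the indices at which $|\hat f(\alpha({:}j))|$ strictly increases, where $\hat f$ is the function of \cref{eq:fhat}; equivalently the points where continuity (in the form of \cref{eq:fhat-cont}) guarantees a new output symbol becomes committed. This yields $\alpha=u_0u_1\cdots$ and an output factorization $\beta=v_0v_1\cdots$ where $v_0\cdots v_i$ is a chosen common prefix of $S_{j_i}=\{f(\alpha({:}j_i)\gamma)\mid \alpha({:}j_i)\gamma\in\mathrm{dom}(R)\}$, adjusted so that there is a run $\mathcal T\colon q_i\xrightarrow{u_i/v_i}q_{i+1}$ whose target $q_{i+1}$ is an input state. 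Property $\mathcal P$ is exactly what makes $q_{i+1}$ independent of which particular run/output we picked, so the sequence of state transformations $(q_0,q_1,c_0)(q_1,q_2,c_1)\cdots$ is well-defined, where I would take $c_i=\mathit{best}(P_{u_i},q_i,q_{i+1})$ from \cref{def:best} to get the optimal colour.

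Next I would turn this into an actual strategy in the game. Eve's strategy plays $skip$ while the input provided by Adam has not yet reached a cut position $j_i$, and when it does, she plays the producing action $(q_i,q_{i+1},c_i)$; this is legal because $(q_i,q_{i+1},c_i)\in P_{u_i}$, and $u_i$ is precisely the word abstracted in the first profile component $P_1$ of the current game vertex (the second component $P_2$ holding the prefix of $u_{i+1}$ seen so far). Crucially this strategy depends only on the prefix of $\alpha$ read so far: by continuity, whether position $j$ is a cut point, and the value of $\hat f$ up to $j$, is determined by a bounded (but not uniformly bounded) lookahead, and once a cut is detected the committed output prefix $v_0\cdots v_i$ — hence the states $q_0,\dots,q_{i+1}$ — is fixed regardless of the future. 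So the prescription is a genuine strategy in the sense of the game.

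Then I would verify it is winning. Take any play consistent with the strategy, let $\alpha$ be Adam's input word, and suppose $\mathrm{max}\,\mathrm{Inf}(\mathit{col}_\mathcal D(\gamma))$ is even, i.e.\ $\alpha\in\mathrm{dom}(R)$. Then $\beta=f(\alpha)$ is an infinite word and, because $f$ is a uniformizer, $(\alpha,\beta)\in R$, so there is an accepting run of $\mathcal T$ on $\alpha/\beta$. I must argue that the concatenation of the chosen runs $\mathcal T\colon q_i\xrightarrow{u_i/v_i}q_{i+1}$ is itself accepting, so that $\mathrm{max}\,\mathrm{Inf}(\mathit{col}_\mathcal T(\gamma))=\mathrm{max}\,\mathrm{Inf}((c_i)_i)$ is even. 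First one shows the $u_i$ are nonempty for $i\geq 1$ and their concatenation is all of $\alpha$ (the cut positions are infinitely many because on $\alpha\in\mathrm{dom}(R)$ the output $f(\alpha)$ is infinite and $\hat f(\alpha({:}j))\to f(\alpha)$), and the $v_i$ concatenate to $\beta$; then, using property $\mathcal P$ together with the choice via $\mathit{best}$, one shows the glued run has the same maximal-infinitely-often colour as the accepting run witnessing $(\alpha,\beta)\in R$ — here one needs that choosing the colour-optimal transformation cannot turn an accepting behaviour into a rejecting one, which follows because $\mathcal P$ forces the state sequence, so only the colours can differ and $\mathit{best}$ picks the best achievable one.

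The main obstacle is this last colour-compatibility argument: showing that the block-wise optimal choice of colours $c_i=\mathit{best}(P_{u_i},q_i,q_{i+1})$ yields a parity-accepting infinite run whenever \emph{some} output makes $(\alpha,\beta)\in R$. One has to rule out the situation where the honest accepting run and the reconstructed run, though they pass through the same input-state checkpoints $q_i$ (by $\mathcal P$), differ on the intermediate output states in a way that changes the limit parity; the resolution is that within block $i$ the set of achievable $(q_i,q_{i+1},\cdot)$-colours is exactly $\{m\mid (q_i,q_{i+1},m)\in P_{u_i}\}$, $\mathit{best}$ returns an even value whenever one exists, and an infinite product of such blockwise-optimal choices dominates, in the $\limsup$ sense, any other consistent choice — in particular the one coming from the genuine accepting run. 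I would also need to be careful that the "nonempty blocks / infinitely many cuts" claim really holds, which is where \cref{eq:fhat-cont} and the fact that $f$ is total on $\mathrm{dom}(R)$ enter, and that the first block $u_0$ (possibly empty, with $v_0$ possibly empty) is handled as the degenerate initial case matching the initial vertex $(q_0^{\mathcal T},-1,P_\varepsilon,P_\varepsilon,q_0^{\mathcal D},\adam)$ of the game.
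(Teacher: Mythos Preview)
Your approach matches the paper's: factorize $\alpha$ via $\hat f$, use property $\mathcal P$ to pin down the checkpoint states $q_i$, play the state transformations with colours given by $\mathit{best}$, and verify winning by comparing with an accepting run on $(\alpha,f(\alpha))$. The colour-compatibility argument you flag as the main obstacle is handled exactly as you outline, and exactly as the paper does: property $\mathcal P$ forces every accepting run on $(\alpha,f(\alpha))$ through the same input-state checkpoints $q_i$, so only the block colours can differ, and $\mathit{best}$ is blockwise optimal in the right sense for the parity condition.

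There is one gap you did not flag. You define the cut positions as the indices where $|\hat f(\alpha({:}j))|$ strictly increases---which is also what the paper's proof \emph{sketch} says. But this alone does not guarantee that, when Eve must commit the producing action for block $u_{i-1}$ (i.e.\ after Adam has supplied $u_0\cdots u_i$), the committed output $\hat f(u_0\cdots u_i)$ is long enough to contain a prefix $v$ for which some run $\mathcal T\colon q_0^{\mathcal T}\xrightarrow{u_0\cdots u_{i-1}/v}$ lands in an input state. A single new output letter being committed does not mean enough output is committed to push the $\mathcal T$-run through all of $u_0\cdots u_{i-1}$. Without such a $v$, Eve has no way to determine $q_i$, and your ``adjusted so that\ldots'' clause has nothing to work with; property $\mathcal P$ is also inapplicable since it only relates prefix-comparable outputs that \emph{do} reach input states. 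The paper's full proof resolves this by \emph{defining} $u_i$ as the shortest nonempty prefix of the remaining input such that some $v\preceq \hat f(u_0\cdots u_i)$ admits a run on $u_0\cdots u_{i-1}/v$ ending in an input state (\cref{eq:factor}); the cut is placed exactly where the committed output first suffices for the \emph{previous} block. With that correction your argument goes through verbatim.
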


\begin{proof}
To begin with, we describe how to inductively factorize an input word $\alpha \in \Sigma^\omega$.
Let $u_0 = \varepsilon$, and let $u_0\cdots u_{i-1}$ be the already defined factors of $\alpha$, let $\alpha = u_0\cdots u_{i-1}\alpha'$, and let
\begin{equation}\label{eq:factor}
u_i := \alpha'(:\! k),
\end{equation}
where $k\in\mathbbm N_{\geq 1}$ is chosen such that $\alpha'(:\! k)$
is the shortest non-empty prefix of $\alpha'$ such that there is a prefix of $\hat{f}(u_0\cdots u_{i-1}\alpha'(:\! k))$, say $v$, that satisfies
$\Delta_\mathcal T^*(q_0^\mathcal T,u_0\cdots u_{i-1}/v) \cap
Q_i^\mathcal T \neq \emptyset$ (let $Q_i^\mathcal T$ denote the set of
input states of the transducer $\mathcal T$ \modif{as defined at the
beginning of \cref{sec:completeness}}).
In words, the input $u_0\cdots u_i$ provides enough information to determine a run of $\mathcal T$ that consumes the input $u_0\cdots u_{i-1}$.
If such a prefix of $\alpha'$ does not exist, let $u_i = \alpha'$, that is, the infinite remainder of $\alpha$.

Let $\alpha_1,\alpha_2 \in \Sigma^\omega$, and let $u_0^1\cdots$ and $u_0^2\cdots$ denote their factorizations built according to \cref{eq:factor}.
Note that $u_i^1 = u_i^2$ for all $i$ such that $|u_0^1\cdots u_i^1| \leq |\alpha_1 \wedge \alpha_2|$.
Furthermore, given an input word $\alpha$, the factorization according to \cref{eq:factor} can be determined on-the-fly reading the input word letter-by-letter.
If $\alpha \in \mathrm{dom}(f)$, its factorization has infinitely many
factors, because $f$ is continuous meaning that $\hat{f}$ grows over
time (\modif{$\hat{f}$ has been defined in \cref{sec:prelims}}).

In the remainder of the proof we are only interested in plays where the input word $\alpha$ spelt by Adam is in $L(\mathcal D)$, i.e., $\mathrm{dom}(R)$, because all other plays automatically satisfy $\mathit{Win}$, so there is nothing to show for plays whose corresponding input word $\alpha \notin L(\mathcal D)$.
Since $f$ uniformizes $R$, we have that $\mathrm{dom}(f) = \mathrm{dom}(R)$, so we can safely consider only plays where its corresponding input word $\alpha \in \mathrm{dom}(f)$.

The idea of the proof is that we use the sequence of input blocks $u_0u_1\cdots$ of $\alpha$ built as in \cref{eq:factor} to obtain the actions of Eve.
Her actions pick state transformations, our goal is to pick state transformations by defining a sequence of output blocks $v_0v_1\cdots$ and pick state transformations that are induced by $u_0 / v_0 \cdot u_1 / v_1\cdots$.
Formally, let $v$ be the shortest prefix of $\hat{f}(u_0\cdots u_i)$ such that $\Delta_\mathcal T^*(q_0^\mathcal T,u_0\cdots u_{i-1}/v) \cap
Q_i^\mathcal T \neq \emptyset$ and define
\begin{equation}\label{eq:v}
{v_{i-1}} = (v_0\cdots v_{i-2})^{-1}v.
\end{equation}

We define a strategy that satisfies the following invariant for all $i\in\mathbbm N$:

\begin{lemma}\label[lem]{claim:invariant}
If there exists a run $\rho$ of $\mathcal T$ on ${u_0}\cdots{u_{i-1}} / {v_0}\cdots{v_{i-1}}$ of the form
\[
\mathcal T\colon \underbrace{q_0^\mathcal T \xrightarrow{{u_0} / {v_0}} q_1 \in Q_i^\mathcal T}_{\rho_0} \quad \underbrace{q_1 \xrightarrow{{u_1} / {v_1}} q_2\in Q_i^\mathcal T}_{\rho_1} \quad \cdots \quad \underbrace{q_{i-1} \xrightarrow{{u_{i-1}} / {v_{i-1}}} q_i\in Q_i^\mathcal T}_{\rho_{i-1}},
\]
then the sequence of chosen producing edges in the play after the input $u_0\cdots u_i$ has the label sequence 
\[
(q_0^\mathcal T,q_1,c_0)(q_1,q_2,c_1)\cdots(q_{i-1},q_{i},c_{i-1}),
\]
where the edge labeled with $(q_j,q_{j+1},c_{j}) \in P_{u_j}$ with $c_j = \mathit{best}(P_{u_j},q_j,q_{j+1})$ (see \cref{def:best}) is taken after the input $u_0\cdots u_{j+1}$ for all $0 \leq j \leq {i-1}$.
Furthermore, after the input $u_0\cdots u_i$ the play is in the vertex 
\[
(q_{i-1},-1,P_{u_{i-1}},P_{u_i},\delta_\mathcal D^*(q_0^\mathcal D,u_0\cdots u_i),\exists).
\]
\end{lemma}

We show the invariant holds by induction.
For $k=0$, the invariant trivially holds.
For the step $k-1 \to k$, assume that Adam's actions have spelt $u_0\cdots u_{k}$, and for $v_0\cdots v_{k-2}$ the invariant holds.
Let $v_{k-1}$ as in \cref{eq:v}, let $\rho_{k-1}$ be a run of the form $q_{k-1} \xrightarrow{{u_{k-1}} / {v_{k-1}}} q_k \in Q_i^\mathcal T$ which must exist by choice of $v_{k-1}$.
The invariant yields that after input $u_0\cdots u_{k-1}$ the play is in the vertex 
\[
  (q_{k-2},-1,P_{u_{k-2}},P_{u_{k-1}},\delta_\mathcal D^*(q_0^\mathcal D,u_0\cdots u_{k-1}),\exists)
\]
 and then Eve takes the producing edge labeled with $(q_{k-2},q_{k-1},c_{k-2})$.
Thus, after further input $u_k$, the play is in the vertex 
\[
  (q_{k-1},-1,P_{u_{k-1}},P_{u_{k}},\delta_\mathcal D^*(q_0^\mathcal D,u_0\cdots u_{k}),\exists).
\]
Since $\rho_{k-1}$ is of the form $\mathcal T\colon q_{k-1}\xrightarrow{u_{k-1} / v_{k-1}} q_k$, the priority $\mathit{best}(P_{u_{k-1}},q_{k-1},q_k)$ is defined, let $c_{k-1}$ denote this priority.
Thus, this vertex has a producing edge with label $(q_{k-1},q_{k},c_{k-1}) \in P_{u_{k-1}}$.
Eve's action is to take this edge.
This concludes the proof that the invariant holds for $k$.

It is left to show the above defined strategy is winning.
Recall the condition for a play $\gamma$ such that it is winning for Eve.
\begin{equation*}
    \mathrm{max}\ \mathrm{Inf}(\mathit{col}_\mathcal D (\gamma)) \text{ is even} \rightarrow \mathrm{max}\ \mathrm{Inf}(\mathit{col}_\mathcal T (\gamma)) \text{ is even}.
    \end{equation*}
Consider a play $\gamma$ according to the above defined strategy, let $\alpha$ be the input word associated to $\gamma$, assume that $\alpha \in \mathrm{dom}(f)$, and let $u_0\cdots$ be its factorization according to \cref{eq:factor}.
By definition of the game graph, the sequence $\mathit{col}_\mathcal D (\gamma)$ is exactly the sequence $c_\mathcal D(\rho_\mathcal D)$, where $\rho_\mathcal D$ is the run of $\mathcal D$ on $\alpha$.
Since $\alpha \in L(\mathcal D)$, we have that $\mathrm{max}\ \mathrm{Inf}(\mathit{col}_\mathcal D (\gamma))$ is even.
Thus, we have to show that $\mathrm{max}\ \mathrm{Inf}(\mathit{col}_\mathcal T (\gamma))$ is even.

Recall that $\mathit{col}_\mathcal T (\gamma)$ is the sequence of colors obtained from the chosen state transformation functions in the play, where $-1$ is the color when no state transformation function is chosen.
Since $\alpha \in \mathrm{dom}(f)$ and $f$ is continuous, the factorization according to \cref{eq:factor} that was used to build the strategy has infinitely many factors.
Hence, infinitely many producing edges are taken, meaning that
$\mathrm{max}\ \mathrm{Inf}(\mathit{col}_\mathcal T (\gamma)) \in C_\mathcal T$.
Furthermore, we also defined infinitely many $v_i$.
Let $\beta = v_0v_1\cdots$, clearly $(\alpha,\beta) \in R$, because $f(\alpha) = \beta$.

Thus, there exists an accepting run of $\mathcal T$ on $\alpha/\beta$.
Consider the run $\rho$ of $\mathcal T$ on $\alpha/\beta$ of the form $\rho_0\rho_1\cdots$ build according to \cref{claim:invariant}.
We show that every accepting run $\rho'$ of $\mathcal T$ on $\alpha/\beta$ has a factorization of the form:

\[
\mathcal T\colon \underbrace{q_0^\mathcal T \xrightarrow{{u_0} / {v'_0}} q_1 \in Q_i^\mathcal T}_{\rho'_0} \quad \underbrace{q_1 \xrightarrow{{u_1} / {v'_1}} q_2\in Q_i^\mathcal T}_{\rho'_1} \quad \cdots \quad \underbrace{q_{i-1} \xrightarrow{{u_{i-1}} / {v'_{i-1}}} q_i\in Q_i^\mathcal T}_{\rho'_{i-1}}.
\]

Assume there is a run $\rho'$ that does not have this property.

\modif{
To begin with, we assume that the run $\rho'$ can not be factorized such that input word $\alpha$ is split into factors $u_0u_1\cdots$.
This problem can not occur, because Item~(1) of property $\mathcal P$ ensures the possiblity of such a factorization. 
Item~(1) ensures that after reading an input letter a state is reached.
Thus, any input factorization can be used to obtain a factorization of the run.
}
\modif{
Hence, assume that the factorization $\rho_0'\rho_1'\cdots$ is picked such that it respects the input factorization, but some reached state is different.
}
Towards a contradiction, pick the first $i$ such that $\rho'_i$ of the form $q_i \xrightarrow{u_i / v_i} p_{i+1}$ with $p_{i+1} \neq q_{i+1}$.
However, since either $v_0\cdots v_i$ is a prefix of $v'_0\cdots v'_i$ or vice versa, because $\mathcal T$ \modif{satisfies Item~(2) of} property $\mathcal P$, it is implied that if $\mathcal T\colon q_i \xrightarrow{u_0\cdots u_i / v_0\cdots v_i} q_{i+1}$ and $\mathcal T\colon q_i \xrightarrow{u_0\cdots u_i / v'_0\cdots v'_i} p_{i+1}$ then $p_{i+1} = q_{i+1}$.
We have a contradiction.

Hence, pick any accepting run $\rho'$ of $\mathcal T$ on $\alpha/\beta$.
Since $\rho'$ is accepting, $\mathrm{max}\ \mathrm{Inf}(c(\rho'))$ is even.
Recall that we have to show that $\mathrm{max}\ \mathrm{Inf}(\mathit{col}_\mathcal T (\gamma))$ is even.
Recall that $\mathit{col}_\mathcal T (\gamma) = c_0c_1\cdots$ according to \cref{claim:invariant}.
We have chosen $c_i$ as $\mathit{best}(P_{u_{i}},q_{i},q_{i+1})$.
Consider the factorization $\rho'_0\rho'_1\cdots$ of $\rho'$ as above.
Clearly, the color $c_i$ is at least as good for Eve as \modif{the
  maximal color that occurs on $\rho'_i$, i.e., $\mathrm{max}\ \text{Occ}(c(\rho'_i))$}.
Thus, since $\mathrm{max}\ \mathrm{Inf}(c(\rho'))$ is even, we also have that $\mathrm{max}\ \mathrm{Inf}(\mathit{col}_\mathcal T (\gamma))$ is even which concludes the proof.
\end{proof}
\section{The Special Case of Closed Domains}
\label{sec:closed}

We turn to the setting of closed domains and show that bounded delay suffices.

\begin{lemma}
  \label[lem]{lemma:closedtobounded}
  Let $R$ with $\mathrm{dom}(R)$ closed be defined by a transducer $\mathcal T$ with property $\mathcal P$.
  If $R$ is uniformizable by a computable function, then there exists
  a computable $\ell \geq 0$ (computable from $\mathcal T$) such that Eve has a winning strategy in $\mathcal G_\mathcal T$ with at most $\ell$ consecutive \emph{skip}-moves.
\end{lemma}

Intuitively, the reason why bounded lookahead suffices in the setting of closed domains is that (basically at each point of time during a play) Adam's moves describe a series of longer and longer finite input words that ``converge'' to a valid infinite input word from the domain.
Hence, Eve can not wait arbitrarily long to make producing moves, as such a play describes a valid infinite input sequence and a finite output sequence. 
We give the formal proof in the upcoming sections.

Together with \cref{thm:functogame,lemma:boundedstrat}, we obtain the following corollary.

\begin{corollary}
  Let $R$ with $\mathrm{dom}(R)$ closed be defined by a transducer $\mathcal T$ with property $\mathcal P$.
  The following are equivalent:
  \begin{enumerate}
    \item $R$ is uniformizable by a computable function.
    \item $R$ is uniformizable by a function computable by a 1DFT.
    \item Eve has a winning strategy in $\mathcal G_\mathcal T$.
  \end{enumerate}

Moreover, if $\mathcal T$ is letter-to-letter, then $R$ is
uniformizable by a computable function iff it is uniformizable by a
function computable by an $\ell$-delay 1DFT for some computable
$\ell$. 
\end{corollary}

We highlight two facts regarding closed domains.

\begin{remark}
The set of infinite words over a finite alphabet is closed, i.e., every total domain is closed.
\end{remark}

\begin{remark}
Furthermore, it is decidable whether a domain (e.g., given by a Büchi automaton) is closed.
\end{remark}

It is a well-known fact that the topological closure of a Büchi language is a Büchi language (one can trim the automaton and declare all states to be accepting) and therefore one can check closedness by checking equivalency with its closure.

\subsection{Uniform Continuity}
\label{app:closedtounif}

In the upcoming proofs, we heavily rely on the notion of uniform continuity.
First, we recall what is meant
by uniform continuity. Intuitively, it means that the amount of input
letters that have to be seen before further output letters can be
determined is independent of the concrete input sequence. Formally, a function $f\colon \Sigma^\omega\rightarrow
\Gamma^\omega$ is \emph{uniformly continuous} if
\begin{equation}\label{eq:continuous-uni}
  \forall i \in \mathbbm N\ \exists j \in \mathbbm N\ \forall \alpha \in \mathrm{dom}(f)\ \forall \beta \in \mathrm{dom}(f)\colon |\alpha \wedge \beta| \geq j \rightarrow |f(\alpha) \wedge f(\beta)| \geq i.
\end{equation}
The function $m : \mathbbm{N}\rightarrow \mathbbm{N}$
which associates to any $i$, some $j$ satisfying the above equation is
called a modulus of continuity. It guarantees that to get $i$ output
symbols, it suffices to read at most $m(i)$ input symbols.

We state two easy to see facts about (uniformly) continuous function used in the remainder.

\begin{remark}
If $f$ is continuous, i.e., $f$ satisfies \cref{eq:continuous}, we have
\begin{equation}\label{eq:fhat-cont}
    \forall u \in \prefs{\mathrm{dom}(f)}\forall i\in\mathbbm N\ \exists j\in\mathbbm N\ \forall u' \in \prefs{\mathrm{dom}(f)}\colon |u \wedge u'| \geq j \rightarrow |\hat{f}(u)| \geq i.
\end{equation}
If $f$ is uniformly continuous, i.e., $f$ satisfies \cref{eq:continuous-uni}, we have
\begin{equation}\label{eq:fhat-uni}
\forall i\in\mathbbm N\ \exists j\in\mathbbm N\ \forall u \in \prefs{\mathrm{dom}(f)}\colon |u| \geq j \rightarrow |\hat{f}(u)| \geq i.
\end{equation}
\end{remark}

To prove \cref{lemma:closedtobounded} we make an important connection between closed domains and uniformly continuous functions.

\modif{It is a well-known fact that the set $\Sigma^\omega$ is a compact
space when equipped with the Cantor distance defined in \cref{note}.}
A closed subset of a compact space is compact, see \cite{arkhangel1990basic}.
Hence, a closed domain $D \subseteq \Sigma^\omega$ is a compact space.
We arrive at the following remark as a consequence of König’s Lemma, or equivalently of the fact that continuous functions on a
compact space are uniformly continuous as stated by the Heine-Cantor
theorem.

\begin{remark}\label[rem]{remark:uniform}
  Assume $R(\mathcal T)$ has a closed domain.
  If $R(\mathcal T)$ is uniformizable by a continuous function, then $R(\mathcal T)$ is uniformizable by a uniformly continuous function.
\end{remark}

\subsection{Finite-state Bounded Delay Game}
\label{page:game-uni}

We show that if $R(\mathcal T)$ is uniformizable by a uniformly continuous function $f$, then Eve has a winning strategy in the corresponding finite-state delay game (based on $\mathcal T$ with property $\mathcal P$) with \emph{bounded delay}.

We show that the bound on the necessary lookahead can be based on their profiles, and provide an adapted game. 
Given a profile $P$, let $L(P)$ denote the set $\{ u \in \Sigma^* \mid P_u = P\}$.
Let $\mathcal P_{fin}$ be the set of profiles whose associated languages are finite, let $L = \bigcup_{P \in \mathcal P_{fin}} L(P)$, and let $\ell$ be the length of the longest word in $L$.
We show that a lookahead of $\ell$ suffices.
We change the game used to model the continuous setting to (implicitly) reflect this bound, i.e., Eve is only allowed to delay to pick skip at most $\ell$ times in a row.

\subparagraph{Finite-state bounded delay game.}
Let $\mathcal G_\mathcal T^{\mathit{uni}} = (G,\mathit{Win})$ be the
infinite-duration turn-based two-player game \modif{obtained by removing from
$\mathcal G_\mathcal T$ (see \cpageref{page:game}) all edges
$\bigl( q,c,P_1,P_2,r,\eve \bigr) \xrightarrow{skip}
\bigl( q,-1,P_1,P_2,r,\adam \bigr)$ such that $L(P_2)$ is infinite}.

\begin{remark}
  The game $\mathcal G_\mathcal T^{\mathit{uni}}$ can be obtained from $\mathcal G_\mathcal T$ in \upshape{\textsc{ExpTime}}.
\end{remark}

To see this one has to realize that for each profile $P$ one has to check whether $|L(P)| \neq \infty$.
The number of profiles is exponential in $\mathcal T$ as is the size of an automaton for $L(P)$.
Checking whether $|L(P)| \neq \infty$ can be done in polynomial time in the automaton for $L(P)$.

\begin{remark}\label[rem]{remark:strat}
  Let $\sigma_\exists$ be a winning strategy for Eve in $\mathcal G_\mathcal T^{\mathit{uni}}$, clearly it is also a winning strategy for Eve in $\mathcal G_\mathcal T$ since $\mathcal G_\mathcal T^{\mathit{uni}}$ is obtained from $\mathcal G_\mathcal T$ by adding edge constraints.
  Furthermore, if Eve plays according to $\sigma_\exists$ she has at most $\ell$ consecutive \emph{skip}-moves.
\end{remark}

\subsection{From Uniformly Continuous Functions to Strategies with Bounded Delay}

\begin{lemma}\label[lem]{lemma:func2strat-uniform}
Let $R$ with $\mathrm{dom}(R)$ closed be defined by a transducer with property $\mathcal P$.
If $R$ is uniformizable by a uniformly continuous function $f$, then Eve has a winning strategy in $\mathcal G_\mathcal T^{\mathit{uni}}$.
\end{lemma}


\begin{proof}
To begin with, we recall some definitions made in the paragraph preceding the definition of $\mathcal G_\mathcal T^{\mathit{uni}}$, see \cpageref{page:game-uni}.
Given a profile $P$, $L(P)$ denotes the set $\{ u \in \Sigma^* \mid P_u = P\}$.
The set $\mathcal P_{fin}$ is the set of profiles whose associated languages are finite, $L = \bigcup_{P \in \mathcal P_{fin}} L(P)$, and $\ell$ is the length of the longest word in $L$.
\modif{Since the game $\mathcal G_\mathcal T^{\mathit{uni}}$ is the game
$\mathcal G_\mathcal T$ where all edges $\bigl( q,c,P_1,P_2,r,\eve
\bigr) \xrightarrow{skip} \bigl( q,-1,P_1,P_2,r,\adam \bigr)$ with
$L(P_2)$ infinite have been removed, Eve never uses $skip$ more than
$\ell$ times in a row.}

We describe how to inductively factorize an input word $\alpha \in \Sigma^\omega$.
Let $u_0 = \varepsilon$, and let $u_0\cdots u_{i-1}$ be the already defined factors of $\alpha$, let $\alpha = u_0\cdots u_{i-1}\alpha'$, and let
\begin{equation}\label{eq:factor-uni}
u_i := \alpha'(:\! k),
\end{equation}
where $k\in\mathbbm N_{\geq 1}$ is the smallest number such that $P_{u_i}\notin \mathcal P_{fin}$, in other words, $L(P_{u_i})$ is infinite.
Note that such a prefix always exists since $L$ is finite.
Moreover, this implies that $|u_i| \leq \ell$ for all $i\in\mathbbm N$.

Let $\alpha_1,\alpha_2 \in \Sigma^\omega$, and let $u_0^1\cdots$ and $u_0^2\cdots$ denote their factorizations built according to \cref{eq:factor-uni}.
Note that $u_i^1 = u_i^2$ for all $i$ such that $|u_0^1\cdots u_i^1| \leq |\alpha_1 \wedge \alpha_2|$.
Furthermore, given an input word $\alpha$, the factorization according to \cref{eq:factor-uni} can be determined on-the-fly reading the input sequence letter-by-letter.

Recall that $f$ uniformizes $R = R(\mathcal T)$, thus, $\mathrm{dom}(f) = \mathrm{dom}(R) = L(\mathcal D)$.
Note that if the prefix ${u_0}\cdots{u_{i}}$ of the input word $\alpha$ from a play can not be completed to a word from $\mathrm{dom}(f)$, we do not care how the strategy behaves from this point on, because then the play automatically belongs to $\mathit{Win}$.
So in the remainder of the proof we only consider plays where Adam spells a word that belongs to $\mathrm{dom}(f)$.

We design a strategy of Eve that behaves as follows.
\modif{For all $i\in\mathbbm N_{\geq 1}$}, after Adam's actions have spelt $u_0\cdots u_i$, Eve's action is to take a producing edge.
In between factors, Eve's action is to skip.
\modif{This ensures that  for all $i\in\mathbbm N_{\geq 1}$, after
Adam's actions have spelt $u_0\cdots u_i$ and Eve has picked an action, the play is in a vertex of the form} 
\[
(q_{i-1},-1,P_{u_{i-1}},P_{u_i},\delta_\mathcal D^*(q_0^\mathcal D,u_0\cdots u_i),\exists)
\]

Note that since $L(P_{u_i})$ is infinite, as ensured by the factorization according to \cref{eq:factor-uni}, Eve cannot use $skip$ and must take a producing edge.
Furthermore, in between factors, Eve is indeed allowed to use the $skip$ action.

Now we explain how we use the sequence of input blocks $u_0u_1\cdots$ of $\alpha$ built as in \cref{eq:factor-uni} to obtain the actions of Eve.
Her actions pick state transformations, our goal is to pick state transformations by defining a sequence of output blocks $v_0v_1\cdots$ and pick state transformations that are induced by $u_0 / v_0 \cdot u_1 / v_1\cdots$.

As described at the beginning of this proof, after a certain amount of input, Eve must make a producing action.
Generally, given two prefixes $u \preceq uu' \in \prefs{\mathrm{dom}(f)}$, it can be the case that $\hat{f}(u) = \hat{f}(uu')$.
However, if $u'$ is such that $L(P_{u'})$ is infinite, then we can
pick an arbitrary long $u''$ that has the same profile as $u'$.
We have that $uu''\in \prefs{\mathrm{dom}(f)}$, because having the same profile means that they induce the same behavior in $\mathcal T$.
Since $f$ is uniformly continuous, $\hat{f}(u) \preceq \hat{f}(uu'')$ if $u''$ is long enough.
The idea is to use $\hat{f}(uu'')$ instead of $\hat{f}(uu')$ to determine Eve's next action.

Formally, in order to choose output blocks $v_0,v_1,\dots$, we inductively build blocks $\bar{u}_0,\bar{u}_1,\dots$, where $\bar{u}_i$ is based on $u_i$ for all $i \in \mathbbm N$.
The idea is that, since each $u_i$ is such that $L(P_{u_i})$ is
infinite for $i\in\mathbbm N_{\geq 1}$, we can pick some $\bar u_i$
that has the same profile as $u_i$ and such that there is a prefix $\bar{v}$ of \modif{the output
\begin{equation}\label{eq:iter-uni}
    \hat{f}(\bar{u}_0\cdots\bar{u}_i)
\end{equation}
that satisfies} $\Delta_\mathcal T^*(q_0^\mathcal
T,\bar{u}_0\cdots\bar{u}_{i-1}/\bar{v}) \cap Q_i^\mathcal T \neq
\emptyset$ ($Q_i^\mathcal T$ denotes the set of input states of
$\mathcal T$ as \modif{defined at the beginning of \cref{sec:completeness}}). This is possible if $\bar{u}_0\cdots\bar{u}_i$ can be completed to an $\omega$-word from $\mathrm{dom}(f)$ because of \cref{eq:fhat-uni}.
This makes it possible to pick output blocks for all previous input blocks.
We let $\bar{v}$ be the shortest prefix of $\hat{f}(\bar{u}_0\cdots\bar{u}_i)$ such that $\Delta_\mathcal T^*(q_0^\mathcal T,\bar{u}_0\cdots\bar{u}_{i-1}/\bar{v}) \cap Q_i^\mathcal T \neq
\emptyset$ and define
\modif{
\begin{equation}\label{eq:barv-uni}
\bar{v}_{i-1} = (\bar{v}_0\cdots\bar{v}_{i-2})^{-1}\bar{v},
\end{equation}
 for all $i \in \mathbbm{N}_{\geq 1}$.}
Based on $\bar u_j$, $\bar v_j$, and $u_j$, we pick $v_j$, which is explained further below.

We show that for the input block sequence $u_0\cdots u_i$, with $\bar u_0\cdots\bar u_{i-1}$ and $\bar v_0\cdots\bar v_{i-1}$ defined as described above, we can pick an output block sequence $v_0\cdots v_{i-1}$ such that the following claim is satisfied for all $i\in\mathbbm N$:

\begin{lemma}\label[lem]{claim:invariant-uni}
If there is a run $\bar\rho$ of $\mathcal T$ on $\bar{u}_0\cdots\bar{u}_{i-1} / \bar{v}_0\cdots\bar{v}_{i-1}$ of the form
\[
\mathcal T\colon \underbrace{q_0^\mathcal T \xrightarrow{\bar{u}_0 / \bar{v}_0} q_1\in Q_i^\mathcal T}_{\bar\rho_0} \quad \underbrace{q_1 \xrightarrow{\bar{u_1} / \bar{v}_1} q_2\in Q_i^\mathcal T}_{\bar\rho_1} \quad \cdots \quad \underbrace{q_{i-1} \xrightarrow{\bar{u}_{i-1} / \bar{v}_{i-1}} q_i\in Q_i^\mathcal T}_{\bar\rho_{i-1}},
\]
then there is a run $\rho$ of $\mathcal T$ on ${u_0}\cdots{u_{i-1}} / {v_0}\cdots{v_{i-1}}$ of the form
\[
\mathcal T\colon \underbrace{q_0^\mathcal T \xrightarrow{{u_0} / {v_0}} q_1}_{\rho_0} \quad \underbrace{q_1 \xrightarrow{{u_1} / {v_1}} q_2}_{\rho_1} \quad \cdots \quad \underbrace{q_{i-1} \xrightarrow{{u_{i-1}} / {v_{i-1}}} q_i}_{\rho_{i-1}},
\]
and the sequence of chosen producing edges in the play after the input $u_0\dots u_i$ has the label sequence 
\[
(q_0^\mathcal T,q_1,c_0)(q_1,q_2,c_1)\cdots(q_{i-1},q_{i},c_{i-1}),
\]
where $c_j = \mathit{best}(P_{u_j},q_j,q_{j+1})$ (see \cref{def:best}) and the edge labeled with $(q_j,q_{j+1},c_{j})$ is taken after the input $u_0\cdots u_{j+1}$ for all $0 \leq j \leq {i-1}$.
Furthermore, after the input $u_0\cdots u_i$ the play is in the vertex 
\[
(q_{i-1},-1,P_{u_{i-1}},P_{u_i},\delta_\mathcal D^*(q_0^\mathcal D,u_0\cdots u_i),\exists).
\]
\end{lemma}

We show the that the claim holds by induction.
For $k = 0$, the claim is trivially true.
For the step $k-1 \to k$, assume that Adam's actions have spelt $u_0\cdots u_k$, we have already defined $\bar{u}_0\cdots \bar{u}_{k-1}$ (which yields the definition of $\bar v_0\cdots \bar{v}_{k-2}$), and $v_0\cdots v_{k-2}$ such that the invariant is satisfied.
Clearly, after $u_0\cdots u_k$, the play is in
\[
(q_{k-1},-1,P_{u_{k-1}},P_{u_k},\delta_\mathcal D^*(q_0^\mathcal D,u_0\cdots u_k),\exists).
\]
We now define $\bar{u}_k$, $\bar{v}_{k-1}$ and $v_{k-1}$ as follows.
Let $\bar u_k$ such that \cref{eq:iter-uni} is satisfied.
Since the profiles of $\bar{u}_0\cdots \bar{u}_{k-1}$ and ${u_0}\cdots {u_{k-1}}$ are the same, it follows that $\bar{u}_0\cdots \bar{u}_{k-1} \in \prefs{\mathrm{dom}(f)}$, because we assumed that ${u_0}\cdots{u_{k-1}} \in \prefs{\mathrm{dom}(f)}$.
This yields the definition of $\bar v_{k-1}$ as given in \cref{eq:barv-uni}.
Pick some $\bar\rho_k$ of the form $\mathcal T\colon q_k \xrightarrow{{\bar u_k} / {\bar v_k}} q_{k+1} \in Q_i^\mathcal T$, and let $c_k = \mathit{best}(P_{\bar{u}_k},q_k,q_{k+1})$, we have $(q_k,q_{k+1},c_k) \in P_{\bar{u}_k}$.
Since $\bar u_k$ and $u_k$ have the same profile, we also have $(q_k,q_{k+1},c_k) \in P_{u_k}$, which implies that we can pick $v_k \in \Gamma^*$ such that $\rho_k$ is of the form $\mathcal T\colon q_k \xrightarrow{{u_k} / {v_k}} q_{k+1}$ and $c_k = \mathrm{max}\ \mathrm{Occ}(c(\rho_k))$.
Furthermore, $(q_k,q_{k+1},c_k) \in P_{u_k}$ implies that Eve can take a producing edge with label $(q_k,q_{k+1},c_k)$.
Hence, the claim is satisfied for $k$.

It is left to argue that the strategy is a winning strategy for Eve.
Recall that if $u_0u_1\cdots \in \mathrm{dom}(f)$, then $\bar{u}_0\bar{u}_1\cdots \in \mathrm{dom}(f)$ and $(\bar{u}_0\bar{u}_1\cdots,f(\bar{u}_0\bar{u}_1\cdots)) \in R$.
The same argumentation as in the proof of \cref{lemma:func2strat} yields that the strategy is winning.
The main point is to prove that every accepting run $\bar{\rho}'$ of $\mathcal T$ on $\bar{u}_0\bar{u}_1\cdots/f(\bar{u}_0\bar{u}_1\cdots)$ can be factorized into the form 
\[
\mathcal T\colon \underbrace{q_0^\mathcal T \xrightarrow{{\bar{u}_0} / {\bar{v}'_0}} q_1}_{\bar{\rho}'_0} \quad \underbrace{q_1 \xrightarrow{{\bar{u}_1} / {\bar{v}'_1}} q_2}_{\bar{\rho}'_1} \quad \cdots \quad \underbrace{q_{i-1} \xrightarrow{{\bar{u}_{i-1}} / {\bar{v}'_{i-1}}} q_i}_{\bar{\rho}'_{i-1}},
\]
with the same intermediate states as both the runs $\rho$ and $\bar{\rho}$ inductively constructed in order to define the strategy that satisfies \cref{claim:invariant-uni}.
In order to prove that this factorization of $\bar{\rho}'$ exists, we use that $\mathcal T$ has property $\mathcal P$ exactly as we have done in the proof of \cref{thm:functogame}.
Since $\bar{\rho}'$ is accepting and each $c_i$ is at least as good as
$\mathrm{max}\ \mathrm{Occ}(c(\bar{\rho}'_i))$, it is easy to see that the strategy is winning.
\end{proof}

Finally, we are able to prove \cref{lemma:closedtobounded}.
The statement is a direct consequence of \cref{remark:uniform} together with \cref{lemma:func2strat-uniform} and \cref{remark:strat}.

\section{Discussion}
\label{sec:discussion}

\subparagraph{Continuous functions.} 
We have shown that
checking the existence of a computable function uniformizing a relation given by transducer with property $\mathcal P$ is decidable (a consequence of \cref{thm:gametofunc,thm:functogame,lemma:complexity}).
The proofs of \cref{thm:gametofunc,thm:functogame} use another notion, which is easier
to manipulate mathematically than computability, namely that of continuity.
These notions are closely related.
If a function $f\colon \Sigma^\omega \to \Gamma^\omega$ is computable, it is also continuous.
This is not difficult to see when comparing the definitions of
computable and continuous functions. The converse does not hold
because the continuity definition does not have any computability
requirements (see \cite{DBLP:conf/concur/DaveFKL20} for a counter-example). However, regarding synthesis, the two notions coincide:

\begin{theorem}\label{thm:equiv}\label{thm:main}
  Let $R$ be defined by $\mathcal T$ with property $\mathcal P$.
  The following are equivalent:
  \begin{enumerate}
  \item $R$ is uniformizable by a continuous function.
  \item $R$ is uniformizable by a computable function.
\end{enumerate}
\end{theorem}

\begin{proof}
  Indeed, any computable uniformizer is continuous.
  \cref{thm:functogame} states that if there exists a computable uniformizer, then there exists a winning strategy in the delay game.
  However, in the proof of this theorem, we show a stronger statement:
 If there exists a continuous uniformizer, then there exists a winning strategy in the delay game.
 Such a strategy can be assumed to have finite-memory (as
finite-memory suffices to win games with $\omega$-regular
conditions). We have shown in the proof of \cref{lemma:strat2func} how
to translate a finite-state winning strategy into an algorithm (a
Turing machine) that computes a function~$f$ which uniformizes the relation.
\end{proof}

\subparagraph{Further undecidability results.}

\cref{thm:undecidable} states that it is undecidable whether a rational relation is uniformizable by a computable function.
As we have seen in \cref*{sec:completeness}, restricting the class of specifications (to \DRAT) yields decidability.
Another approach to obtain decidability is to change the class of desired implementations, and not the class of specifications.
The next proposition shows that this is not fruitful.

\begin{proposition}\label[prop]{thm:undecextra}
It is undecidable whether a rational relation $R$ is uniformizable by a sequential resp.\ letter-to-letter sequential function, even if $R$ has total domain.
\end{proposition}

\begin{proof}
To prove \cref{thm:undecidable}, we sketched a reduction from Post's correspondence problem.
The reduction from Post's correspondence problem yields the undecidability for computable as well as sequential functions but not for letter-to-letter sequential functions.
Hence, we now give a slightly more complicated proof that also yields
the desired result for letter-to-letter sequential functions. \modif{Similar ideas have been used in~\cite{CarayolL14} to prove that
  uniformizability of rational functions by sequential functions of finite words is undecidable}.

We show all undecidability results by reduction from the halting problem for Turing machines.
Let $M$ be a deterministic Turing machine, and let $R_M$ be the relation that contains pairs $(\alpha,\beta)$ of the form $\alpha = c_0\$c_1\$\cdots c_n\$\alpha'$ and $\beta = c'_0\$c'_1\$\cdots c'_n\$\beta'$, where $c_0,\dots,c_n,c'_0,\dots,c'_n$ code configurations of $M$ in the usual way, $c_0$ is the initial configuration, $c_n$ is a halting configuration, $\alpha'$ and $\beta' \in \{a,b\}^\omega$, and either $\beta = \alpha$ if $\alpha'$ contains infinitely many $a$, or there is some $i \in \{0,\dots,n-1\}$ such that $\mathit{succ}(c_i) \neq c'_{i+1}$, where $\mathit{succ}(c_i)$ is the successor configuration of $c_i$ if $\alpha'$ contains finitely many $a$.
We call inputs of the form $c_0\$c_1\$\cdots c_n\$\alpha'$ a valid encoding.

We make the relation $R_M$ total by adding all pairs $(\alpha,\beta)$, where $\alpha$ is not a valid encoding (we do not care about the outputs then).

We argue that $R_M$ is $\omega$-rational.
Clearly, a transducer can guess whether the input is a valid encoding and verify this.
If the input is not a valid encoding, the transducer must find an encoding error, and has nothing to verify regarding the output.

If the input is a valid encoding, the transducer can guess and verify whether the input contains infinitely many $a$.
If it guesses infinitely many $a$, it has to verify that input and output are equal.
If it guesses finitely many $a$, it reads the first configuration of the output, and can then read $c_i$ and $c'_{i+1}$ in parallel for all $i \in \{0,\cdots,n-1\}$ and check whether $\mathit{succ}(c_i)\neq c'_{i+1}$. 

We show that $R_M$ is uniformizable by a continuous function, a computable function, a sequential function, resp.\ a letter-to-letter sequential function iff $M$ does not halt on the empty tape.

Assume $M$ does not halt.
Let $f_{\mathit{id}}$ be the identity function.
Clearly, $f_{\mathit{id}}$ is continuous, computable, sequential, and synchronous sequential.
It is easy to verify that $f_{\mathit{id}}$ uniformizes $R_M$.
Consider an infinite input word $\alpha$.
We only consider the interesting case that $\alpha$ is a valid encoding.
If $\alpha$ contains infinitely many $a$, clearly $(\alpha,f_{\mathit{id}}(\alpha)) \in R_M$.
If $\alpha$ contains finitely many $a$, there has to be some $i \in \{0,\dots,n-1\}$ such that $\mathit{succ}(c_i) \neq c'_{i+1}$.
This is true, because $M$ does not halt, thus, the configuration sequence $c_0\$c_1\$\cdots c_n$ has an error, i.e., there is some $i$ such that $\mathit{succ}(c_i) \neq c_{i+1}$.
Hence, since $c_0\$c_1\$\cdots c_n\$ = c'_0\$c'_1\$\cdots c'_n\$$, we obtain $\mathit{succ}(c_i) \neq c'_{i+1}$.

Assume $M$ does halt.
Towards a contradiction, assume that $f$ is a function that uniformizes $R_M$ that is either continuous, computable, sequential, or synchronous sequential.
Note that every synchronous sequential function is sequential, every sequential function is computable and every computable function is continuous.
We arrive at a contradiction by using only properties of continuous functions.
Since $M$ halts, there exists a configuration sequence starting with the initial configuration that ends with a halting configuration that has no errors.
Let $u$ denote this configuration sequence.
Consider $ua^\omega \in \mathrm{dom}(f)$.
According to \cref{eq:continuous}, there exists some $j \in \mathbbm N$ such that for all $\beta \in \mathrm{dom}(f)$ holds that $|ua^\omega \wedge \beta| \geq j$ implies that $|f(ua^\omega) \wedge f(\beta)| \geq |u|$. 
Consider $ua^jb^\omega \in \mathrm{dom}(f)$, we have $f(ua^\omega) = ua^\omega$ and $f(ua^jb^\omega) = u'\beta'$ with $|u| = |u'|$ and $u \neq u'$, because $u'$ must begin with a configuration sequences such that there is some $i \in \{0,\dots,n-1\}$ such that $\mathit{succ}(c_i) \neq c'_{i+1}$.
Since there is no $i$ such that $\mathit{succ}(c_i) \neq c_{i+1}$,
because $u$ has no errors, this implies that $u \neq u'$. Thus, $|ua^\omega \wedge ua^jb^\omega| \geq j$, but $|f(ua^\omega) \wedge f(ua^jb^\omega)| < |u|$, which is a contradiction.
\end{proof}

\section{Conclusion and Future Work}

We investigated the synthesis of algorithms (a.k.a.\ Turing machine computable functions) from rational specifications.
While undecidable in general, we have proven decidability for \DRAT (and a fortiori \AUT).
Furthermore, we have shown that the whole computation power of Turing
machines is not needed, two-way transducers are sufficient (and
necessary). As TMs reading heads are read-only left-to-right,
converting a 2DFT into a TM requires that the TM stores longer and
longer prefixes of the input in the working-tape for later
access. This is the only use the TM needs to make of the working tape. 
This is a naive translation, and sometimes the working tape can
be flushed (some prefixes of the input may possibly not be needed
anymore). More generally, it is an interesting research direction to
fine-tune the class of functions targeted by synthesis with respect to
some constraints on the memory, including quantitative
constraints.

Related to the latter research direction is the following open
question: is the synthesis problem of functions computable by
input-deterministic one-way (a.k.a.\ sequential) transducers
from deterministic rational relations decidable? It is already open
for automatic relations. We have shown that if a rational relation
with closed domain is uniformizable by a computable function, then it
also uniformizable by a sequential function.
However, closedness is not a \modif{necessary} condition: e.g., the function which maps any $a^nxc^\omega$ to $xc^\omega$ for $x\in \{\#,\$\}$, is sequential; a sequential transducer just has to erase the $a^n$ part, but its domain is not closed.
 This problem is interesting because sequential transducers only
require bounded memory to compute a function (in contrast to two-way
transducers that require access to unboundedly large prefixes of the
input).

\paragraph{Acknowledgments.} \modif{We warmly thank the anonymous reviewers for their
helpful comments and in particular some reviewer for spotting an error
in the preliminary version of this article, that led to redefining
property $\mathcal{P}$ of \cref{sec:completeness}.}


\bibliographystyle{plain}

\bibliography{biblio}

\begin{thebibliography}{10}

\bibitem{almagor2020good}
Shaull Almagor and Orna Kupferman.
\newblock Good-enough synthesis.
\newblock In {\em International Conference on Computer Aided Verification}, pages 541--563. Springer, 2020.

\bibitem{DBLP:series/natosec/AlurBDF0JKMMRSSSSTU15}
Rajeev Alur, Rastislav Bod{\'{\i}}k, Eric Dallal, Dana Fisman, Pranav Garg, Garvit Juniwal, Hadas Kress{-}Gazit, P.~Madhusudan, Milo M.~K. Martin, Mukund Raghothaman, Shambwaditya Saha, Sanjit~A. Seshia, Rishabh Singh, Armando Solar{-}Lezama, Emina Torlak, and Abhishek Udupa.
\newblock Syntax-guided synthesis.
\newblock In {\em Dependable Software Systems Engineering}, pages 1--25. IEEE, 2015.

\bibitem{alur2012regular}
Rajeev Alur, Emmanuel Filiot, and Ashutosh Trivedi.
\newblock Regular transformations of infinite strings.
\newblock In {\em 2012 27th Annual IEEE Symposium on Logic in Computer Science}, pages 65--74. IEEE, 2012.

\bibitem{arkhangel1990basic}
AV~Arkhangel’skiǐ and Vitaly~V Fedorchuk.
\newblock The basic concepts and constructions of general topology.
\newblock In {\em General Topology I}, pages 1--90. Springer, 1990.

\bibitem{DBLP:books/lib/Berstel79}
Jean Berstel.
\newblock {\em Transductions and context-free languages}, volume~38 of {\em Teubner Studienb{\"{u}}cher : Informatik}.
\newblock Teubner, 1979.

\bibitem{DBLP:conf/atva/BloemEK15}
Roderick Bloem, R{\"{u}}diger Ehlers, and Robert K{\"{o}}nighofer.
\newblock Cooperative reactive synthesis.
\newblock In Bernd Finkbeiner, Geguang Pu, and Lijun Zhang, editors, {\em Automated Technology for Verification and Analysis - 13th International Symposium, {ATVA} 2015, Shanghai, China, October 12-15, 2015, Proceedings}, volume 9364 of {\em Lecture Notes in Computer Science}, pages 394--410. Springer, 2015.

\bibitem{DBLP:journals/acta/BrenguierRS17}
Romain Brenguier, Jean{-}Fran{\c{c}}ois Raskin, and Ocan Sankur.
\newblock Assume-admissible synthesis.
\newblock {\em Acta Informatica}, 54(1):41--83, 2017.

\bibitem{BuLa69}
J.~Richard B{\"u}chi and Lawrence~H. Landweber.
\newblock Solving sequential conditions by finite-state strategies.
\newblock {\em Trans. Ameri. Math. Soc.}, 138:295--311, 1969.

\bibitem{CarayolL14}
Arnaud Carayol and Christof L{\"o}ding.
\newblock {Uniformization in Automata Theory}.
\newblock In {\em Proceedings of the 14th Congress of Logic, Methodology and Philosophy of Science Nancy, July 19-26, 2011}, pages 153--178. London: College Publications, 2014.

\bibitem{DBLP:conf/mfcs/CartonD22}
Olivier Carton and Ga{\"{e}}tan Dou{\'{e}}neau{-}Tabot.
\newblock Continuous rational functions are deterministic regular.
\newblock In {\em {MFCS}}, volume 241 of {\em LIPIcs}, pages 28:1--28:13. Schloss Dagstuhl - Leibniz-Zentrum f{\"{u}}r Informatik, 2022.

\bibitem{DBLP:conf/tacas/ChatterjeeH07}
Krishnendu Chatterjee and Thomas~A. Henzinger.
\newblock Assume-guarantee synthesis.
\newblock In Orna Grumberg and Michael Huth, editors, {\em Tools and Algorithms for the Construction and Analysis of Systems, 13th International Conference, {TACAS} 2007, Held as Part of the Joint European Conferences on Theory and Practice of Software, {ETAPS} 2007 Braga, Portugal, March 24 - April 1, 2007, Proceedings}, volume 4424 of {\em Lecture Notes in Computer Science}, pages 261--275. Springer, 2007.

\bibitem{choffrut1999uniformization}
Christian Choffrut and Serge Grigorieff.
\newblock Uniformization of rational relations.
\newblock In {\em Jewels are Forever}, pages 59--71. Springer, 1999.

\bibitem{clarke2018handbook}
Edmund~M Clarke, Thomas~A Henzinger, Helmut Veith, and Roderick Bloem.
\newblock {\em Handbook of model checking}, volume~10.
\newblock Springer, 2018.

\bibitem{DBLP:conf/icalp/ConduracheFGR16}
Rodica Condurache, Emmanuel Filiot, Raffaella Gentilini, and Jean{-}Fran{\c{c}}ois Raskin.
\newblock The complexity of rational synthesis.
\newblock In Ioannis Chatzigiannakis, Michael Mitzenmacher, Yuval Rabani, and Davide Sangiorgi, editors, {\em 43rd International Colloquium on Automata, Languages, and Programming, {ICALP} 2016, July 11-15, 2016, Rome, Italy}, volume~55 of {\em LIPIcs}, pages 121:1--121:15. Schloss Dagstuhl - Leibniz-Zentrum f{\"{u}}r Informatik, 2016.

\bibitem{DBLP:conf/concur/DaveFKL20}
Vrunda Dave, Emmanuel Filiot, Shankara~Narayanan Krishna, and Nathan Lhote.
\newblock Synthesis of computable regular functions of infinite words.
\newblock In {\em {CONCUR}}, volume 171 of {\em LIPIcs}, pages 43:1--43:17. Schloss Dagstuhl - Leibniz-Zentrum f{\"{u}}r Informatik, 2020.

\bibitem{engelfriet2001mso}
Joost Engelfriet and Hendrik~Jan Hoogeboom.
\newblock Mso definable string transductions and two-way finite-state transducers.
\newblock {\em ACM Transactions on Computational Logic (TOCL)}, 2(2):216--254, 2001.

\bibitem{DBLP:conf/icla/Filiot15}
Emmanuel Filiot.
\newblock Logic-automata connections for transformations.
\newblock In Mohua Banerjee and Shankara~Narayanan Krishna, editors, {\em Logic and Its Applications - 6th Indian Conference, {ICLA} 2015, Mumbai, India, January 8-10, 2015. Proceedings}, volume 8923 of {\em Lecture Notes in Computer Science}, pages 30--57. Springer, 2015.

\bibitem{DBLP:conf/icalp/FiliotJLW16}
Emmanuel Filiot, Isma{\"{e}}l Jecker, Christof L{\"{o}}ding, and Sarah Winter.
\newblock On equivalence and uniformisation problems for finite transducers.
\newblock In {\em {ICALP}}, volume~55 of {\em LIPIcs}, pages 125:1--125:14. Schloss Dagstuhl - Leibniz-Zentrum f{\"{u}}r Informatik, 2016.

\bibitem{DBLP:conf/fsttcs/FiliotW21}
Emmanuel Filiot and Sarah Winter.
\newblock Synthesizing computable functions from rational specifications over infinite words.
\newblock In {\em {FSTTCS}}, volume 213 of {\em LIPIcs}, pages 43:1--43:16. Schloss Dagstuhl - Leibniz-Zentrum f{\"{u}}r Informatik, 2021.

\bibitem{DBLP:journals/sttt/FinkbeinerS13}
Bernd Finkbeiner and Sven Schewe.
\newblock Bounded synthesis.
\newblock {\em Int. J. Softw. Tools Technol. Transf.}, 15(5-6):519--539, 2013.

\bibitem{DBLP:conf/atva/GerstackerKF18}
Carsten Gerstacker, Felix Klein, and Bernd Finkbeiner.
\newblock Bounded synthesis of reactive programs.
\newblock In Shuvendu~K. Lahiri and Chao Wang, editors, {\em Automated Technology for Verification and Analysis - 16th International Symposium, {ATVA} 2018, Los Angeles, CA, USA, October 7-10, 2018, Proceedings}, volume 11138 of {\em Lecture Notes in Computer Science}, pages 441--457. Springer, 2018.

\bibitem{gradel2002automata}
Erich Gradel, Wolfgang Thomas, and Thomas Wilke.
\newblock {\em Automata, logics, and infinite games: a guide to current research}, volume 6014.
\newblock Springer Science \& Business Media, 2002.

\bibitem{holtmann2010degrees}
Michael Holtmann, {\L}ukasz Kaiser, and Wolfgang Thomas.
\newblock Degrees of lookahead in regular infinite games.
\newblock In {\em International Conference on Foundations of Software Science and Computational Structures}, pages 252--266. Springer, 2010.

\bibitem{hopcroft1967approach}
John~E Hopcroft and Jeffrey~D Ullman.
\newblock An approach to a unified theory of automata.
\newblock {\em The Bell System Technical Journal}, 46(8):1793--1829, 1967.

\bibitem{hosch1972finite}
F~Hosch and Lawrence Landweber.
\newblock Finite delay solutions for sequential conditions.
\newblock Technical report, University of Wisconsin-Madison Department of Computer Sciences, 1972.

\bibitem{DBLP:journals/corr/KleinZ14}
Felix Klein and Martin Zimmermann.
\newblock How much lookahead is needed to win infinite games?
\newblock {\em Log. Methods Comput. Sci.}, 12(3), 2016.

\bibitem{IC::Kobayashi69}
K.~Kobayashi.
\newblock Classification of formal languages by functional binary transductions.
\newblock {\em Information and Control}, 15(1):95--109, July 1969.

\bibitem{DBLP:journals/amai/KupfermanPV16}
Orna Kupferman, Giuseppe Perelli, and Moshe~Y. Vardi.
\newblock Synthesis with rational environments.
\newblock {\em Ann. Math. Artif. Intell.}, 78(1):3--20, 2016.

\bibitem{PnuRos:89}
A.~Pnueli and R.~Rosner.
\newblock On the synthesis of a reactive module.
\newblock In {\em ACM Symposium on Principles of Programming Languages (POPL)}. ACM, 1989.

\bibitem{safra1992exponential}
Shmuel Safra.
\newblock Exponential determinization for $\omega$-automata with strong-fairness acceptance condition.
\newblock In {\em Proceedings of the twenty-fourth annual ACM symposium on Theory of computing}, pages 275--282, 1992.

\bibitem{sakarovitch2009elements}
Jacques Sakarovitch.
\newblock {\em Elements of automata theory}.
\newblock Cambridge University Press, 2009.

\bibitem{schewe2007solving}
Sven Schewe.
\newblock Solving parity games in big steps.
\newblock In {\em International Conference on Foundations of Software Technology and Theoretical Computer Science}, pages 449--460. Springer, 2007.

\bibitem{weihrauch2012computable}
Klaus Weihrauch.
\newblock {\em Computable analysis: an introduction}.
\newblock Springer Science \& Business Media, 2012.

\end{thebibliography}

\end{document}